\def\doi#1{\gdef\@doi{#1}}\def\@doi{}
\global\copyrightetc{Copyright \the\copyrtyr\ 
ACM \the\acmcopyr\ ...\$15.00}
\newfont{\mycrnotice}{ptmr8t at 7pt}
\newfont{\myconfname}{ptmri8t at 7pt}
\let\confname\myconfname%
\newcommand{\A}{\mathbf{a}}
\renewcommand\Re{\operatorname{Re}}
\newcommand\Sp{\operatorname{Sp}}
\newcommand\res{\operatorname{res}}
\newcommand\numer{\operatorname{numer}}
\newtheorem{theorem}{Theorem}
\newtheorem{prop}{Proposition}
\newtheorem{cor}{Corollary}
\newtheorem{lem}{Lemma}
\newtheorem{definition}{Definition}
\newenvironment{example}[1][Example]{\begin{trivlist} \item[\hskip \labelsep {\bfseries #1}]}{\end{trivlist}}
\def\C {\ensuremath{\mathbb{C}}}
\def\Q {\ensuremath{\mathbb{Q}}}
\def\N {\ensuremath{\mathbb{N}}}
\def\z {\ensuremath{\mathbf{z}}}
\def\p {\ensuremath{\mathbf{p}}}
\def\q {\ensuremath{\mathbf{q}}}
\def\a {\ensuremath{\mathbf{a}}}
\def\b {\ensuremath{\mathbf{b}}}
\def\c {\ensuremath{\mathbf{c}}}
\def\d {\ensuremath{\mathbf{d}}}
\def\e {\ensuremath{\mathbf{e}}}
\def\f {\ensuremath{\mathbf{f}}}
\def\r {\ensuremath{\mathbf{r}}}
\def\T {\ensuremath{\mbox{\boldmath$\theta$}}}
\def\balpha {\ensuremath{\mbox{\boldmath$\alpha$}}}
\def\bbeta {\ensuremath{\mbox{\boldmath$\beta$}}}
\def\bgamma {\ensuremath{\mbox{\boldmath$\gamma$}}}
\def\bdelta {\ensuremath{\mbox{\boldmath$\delta$}}}
\newenvironment{my_enumerate}{
\vspace{-6pt}
\begin{enumerate}
     \setlength{\itemsep}{0pt}
     \setlength{\parskip}{0pt}
     \setlength{\parsep}{0pt}}
{\end{enumerate}
\vspace{-6pt}
}
\begin{document}
\permission{Permission to make digital or hard copies of all or part of this work for personal or classroom use is granted without fee provided that copies are not made or distributed for profit or commercial advantage and that copies bear this notice and the full citation on the first page. Copyrights for components of this work owned by others than ACM must be honored. Abstracting with credit is permitted. To copy otherwise, or republish, to post on servers or to redistribute to lists, requires prior specific permission and/or a fee. Request permissions from Permissions@acm.org. }

\conferenceinfo{ISSAC '14,}{July 21 - 25, 2014, Kobe, Japan.}
\CopyrightYear{2014}
\crdata{978-1-4503-2501-1/14/07}
\doi{http://dx.doi.org/10.1145/2608628.2608662}
	

\setlength{\abovedisplayskip}{.36\baselineskip} \setlength{\abovedisplayshortskip}{0pt}

\title{Computing necessary integrability conditions\\ for planar parametrized homogeneous potentials\titlenote{We thank the referees for their helpful comments. This work has been supported in part by the Microsoft Research\,--\,Inria Joint Centre.}\vspace{-25pt}}
\newfont{\authfntsmall}{phvr at 11pt}
\newfont{\eaddfntsmall}{phvr at 9pt}

\numberofauthors{3} 
\author{
\alignauthor
Alin Bostan\\
\affaddr{INRIA (France)}\\
\email{\textsf{\normalsize{Alin.Bostan@inria.fr}}}
\alignauthor
Thierry Combot\\
\affaddr{Institut de Mathématiques de Bourgogne
  UMR CNRS 5584}\\
\affaddr{Univ. de Bourgogne (France)}\\
\email{\textsf{\normalsize{thierry.combot@u-bourgogne.fr}}}
\alignauthor
Mohab {Safey El Din}\\
\affaddr{Sorbonne Universities} \\
\affaddr{Univ. Pierre et Marie Curie (Paris 06)}\\
\affaddr{INRIA Paris Rocquencourt, POLSYS Project}\\
\affaddr{LIP6 CNRS, UMR 7606}\\
\affaddr{Institut Universitaire de France}\\
\email{\textsf{\normalsize{Mohab.Safey@lip6.fr}}}	
}
\date{today}

\maketitle
\begin{abstract} 
Let $V\in\mathbb{Q}(i)(\a_1,\dots,\a_n)(\q_1,\q_2)$ be a rationally parametrized planar homogeneous potential of homogeneity degree $k\neq -2, 0, 2$. We design an algorithm that computes polynomial \emph{necessary} conditions on the parameters $(\a_1,\dots,\a_n)$ such that the dynamical system associated to the potential $V$ is integrable. These conditions originate from those of the Morales-Ramis-Sim\'o integrability criterion near all Darboux points. The implementation of the algorithm allows to treat applications that were out of reach before, for instance concerning the non-integrability of polynomial potentials up to degree $9$. Another striking application is the first complete proof of the non-integrability of the \emph{collinear three body problem}. 	
\end{abstract}

\vspace{12mm}
\noindent
{\bf Categories and Subject Descriptors:} \\
\noindent I.1.2 [{\bf Computing Methodologies}]: Symbolic and
Algebraic Manipulations --- \emph{Algebraic Algorithms}\\

\vspace{7mm}
\noindent {\bf General Terms:} Algorithms, Theory.\\

\vspace{7mm}
\noindent {\bf Keywords:} Integrability, potentials, algorithms.

\section{Introduction}\label{sec:intro}

Let us consider the Hamiltonian system
\begin{equation}\label{eq:H}
\dot{\q}_1=\p_1,\quad \dot{\q}_2=\p_2,\qquad \dot{\p}_1=-\frac{\partial V}{\partial \q_1 },\qquad \dot{\p}_2=-\frac{\partial V}{\partial \q_2}
\end{equation}
with $V\in\mathbb{C}(\q_1,\q_2)$, called the
\emph{potential}.
System~\eqref{eq:H} describes the motion of a particle in
the plane submitted to the force field $\nabla V(\q)$. It always admits the
so-called \emph{Hamiltonian\/}
$H=\textstyle{\frac{1}{2}}({\p}_1^2+{\p}_2^2)+V(\q_1,\q_2)$ as a rational 
first
integral. The potential~$V$ is called \emph{(rationally) integrable\/} 
if system~\eqref{eq:H} admits another rational first integral~$I$,
functionally independent on~$H$. 
Intuitively, the integrability of~$V$ is equivalent to the fact that~\eqref{eq:H} can be solved in explicit terms.

Integrability is a rare phenomenon and it is in general a difficult
task to determine whether a given potential is integrable or not. For
\emph{homogeneous potentials} in~$\mathbb{C}(\q_1,\q_2)$,
\emph{necessary\/} conditions for integrability were given by
Morales-Ramis~\cite{MoRa01a} and by
Morales-Ramis-Sim\'o~\cite{MoRaSi07}.  Building on these works, we
design in this article an algorithm which takes as input a
\emph{family\/} of rational homogeneous potentials
$V\in\mathbb{Q}(i)(\mathbf{a})(\q_1,\q_2)$ depending on parameters
${\mathbf a} = (\a_1,\ldots,\a_n)$ and which computes a set of
constraints on the parameter values $a\in\mathbb{C}^n$ that are
necessary for the integrability of~$V(a, \q_1, \q_2)$. These
constraints turn out to be of polynomial nature in $\a$.

There are several difficulties in this parameterized setting. The first one is
that the integrability constraints provided by the Morales-Ramis theory ---on
which our whole approach relies---, are expressed in terms of quantities
(eigenvalues of Hessian matrices at Darboux points, see
Section~\ref{sec:preliminaries}) which are not easily accessible. We circumvent this basic difficulty by using an equation that relates the
eigenvalues, but this brings a new
technical complication since the equation is of Diophantine type. A third difficulty is that the number of Darboux
points itself may depend on the parameters, leading to \emph{singular} cases.

We follow a classical approach, inspired mostly by ideas in~\cite{MaPr05}. Our
contribution to the topic is effective and algorithmic, as we provide a
complete, proven and implemented algorithm for the problem of computing
necessary integrability conditions for planar parametrized homogeneous
potentials, with precise output specifications. Our algorithm uses classical
tools in computer algebra, such as polynomial ideal elimination based on
Gr\"{o}bner bases techniques. An important feature is the use of (complex)
polar coordinates to represent homogeneous potentials by univariate rational
functions with parameters $F\in \mathbb{Q}(i)(\mathbf{a})(\z)$. This change of
representation considerably simplifies the computations and the proofs. For
instance, in polar representation, \emph{singular\/} cases are those with
non-generic multiplicity of the roots/poles of $F$. They are treated by our
algorithm, which builds a tree containing each possible singular case. This
approach is related with comprehensive Gr\"{o}bner
bases~\cite{Weispfenning92}, which are avoided here thanks to some a priori
knowledge about singular cases.

In summary, our strategy for computing necessary integrability conditions for~$V$ consists in~4 steps:
\emph{(i)} rewrite $V$ in polar coordinates; \emph{(ii)} set up a Diophantine
equation whose solutions belong to the so-called \emph{Morales-Ramis
table} (that contains all possible eigenvalues of the Hessian of $V$ at Darboux
points of $V$); \emph{(iii)} solve this Diophantine equation; \emph{(iv)}
rewrite the condition of having prescribed eigenvalues at Darboux points as
polynomial conditions on~$\mathbf{a}$.

Some prior works  used a similar strategy, but it was unclear which cases were possible to tackle, in particular for singular ones. The approach was not fully automatized and this explains that results were
only available for special families of potentials, for instance polynomials of small degree
(3 or 4)~\cite{MaPr04,MaPr05,LiMaVa11,LiMaVa11b}, as the number of singular cases grows very fast (already $44$ for polynomials of degree~$5$).
By contrast, our treatment is unified and fully automated, and it
allows not only to retrieve (and sometimes correct) known results, but more
importantly, to treat potentials of degrees previously unreached (up to~9).
By applying our algorithm to polynomial potentials, we found three new
cases admissible for integrability at degree~$5$ (but still not proved
to be integrable), and various new families for higher degrees.  An
even more striking application of our algorithm is the first complete
proof of the non-integrability of the \emph{collinear three body
  problem}, on which only partial results were
known~\cite{Yoshida87,MoSi09,Shibayama11}.  {The direct
approach that consists in searching first integrals
\cite{Hietarinta83,NaYo01} is complementary to our (non-)integrability
analysis, as our algorithm helps either proving that the lists in
\cite{Hietarinta83,NaYo01} are complete, or finding new unknown cases.}

\medskip
{\emph{Warning: We will assume throughout the article that the homogeneity degree $k$ is different from $-2,0$ and $2$}}. 
(This is because the Morales-Ramis theory is much less powerful when $k\in \{-2,0,2\}$.)

\smallskip\emph{Convention of notation: to avoid confusion, we will use bold letters for variables/parameters, and italic letters for parameter values}.

\vspace{-0.15cm}
\section{Preliminaries and basic notions}
\label{sec:preliminaries}

There exist strong integrability constraints (see Theorem~\ref{thm:morales} below). They require to deal with Darboux points, whose definition we now recall.

\begin{definition}\label{darb}
  Let $V\in\mathbb{C}(\q_1,\q_2)$ be a homogeneous rational function
  of homogeneity degree $k\neq 0$. A point $c=(c_1,c_2)\in\mathbb{C}^2\setminus
 \{0\}$ is called a \emph{(proper) Darboux point} of $V$ if it
  satisfies the equations
\begin{equation}\label{eqdarb}
\frac{\partial V}{\partial \q_1 }(\c)=k \c_1, \quad \frac{\partial V}{\partial \q_2 }(\c)=k \c_2.
\end{equation}
\end{definition} 

\noindent {Note that, by homogeneity, we could have chosen an arbitrary
normalization non-zero constant on the right-hand side of~\eqref{eqdarb}. In
the literature, this normalization constant is frequently chosen equal
to~$1$~\cite{MoRa01}. However, our choice is deliberate, see the remark after
Theorem~\ref{thm:morales}.}

The following result (which is an application of a more general criterion due
to Morales and Ramis~\cite{MoRa01a}) provides \emph{necessary} conditions for
integrability under the form of constraints on eigenvalues of Hessian matrices
at each Darboux point.
It is the basic ingredient for numerous non-integrability
proofs~\cite{MaPr04,MaPr05,MoSi09,MoRa01b,Tsygvintsev01,MaPrYo08,MaPrYo12,Przybylska07,Boucher00}.
{Roughly, its main idea is as follows. A Darboux point
leads to a straight line orbit of the dynamical system~\eqref{eq:H} associated
to~$V$, around which the system~\eqref{eq:H} can be linearized. If the whole
system is integrable, then the linearized system, which in our case
corresponds to a hypergeometric equation, is also integrable. Thus the
integrability table of Theorem~\ref{thm:morales} below is reminiscent of
Kimura's classification \cite{Kimura70} of solvable hypergeometric equations.}

\begin{theorem}\label{thm:morales} (Morales-Ramis \cite{MoRa01})
  Let $V\in\mathbb{C}(\q_1,\q_2)$ be a homogeneous rational function
  of homogeneity degree $k\neq -2, 0, 2$, and let $c\in\mathbb{C}^2\setminus
  \{0\}$ be a Darboux
  point of $V$. If the potential $V$ is integrable, then for any
  eigenvalue~$\lambda$ of the Hessian matrix of $V$ at $c$, the pair
  $(k,\lambda)$ belongs to the following table, for some
  $j\in\mathbb{Z}$.
{\small
\begin{center} 
{\small\renewcommand{\arraystretch}{1.3} \tabcolsep6pt
\begin{tabular}{|c|c||c|c|}\hline
$k$&$\lambda$&$k$&$\lambda$\\\hline
$\mathbb{Z}^*$&$\frac{1}{2}jk(jk+k-2)$& $-3$&$-\frac{25}{8}+\frac{1}{8}(\frac{12}{5}+6 j)^2$ \\\hline
$\mathbb{Z}^*$&$\frac{1}{2}(jk+1)(jk+k-1)$&$3$&$-\frac{1}{8}+\frac{1}{8}(2+6 j)^2$ \\\hline
$-5$&$-\frac{49}{8}+\frac{1}{8}(\frac{10}{3}+10 j)^2$                 &$3$&$-\frac{1}{8}+\frac{1}{8}(\frac{3}{2}+6 j)^2$ \\\hline
$-5$&$-\frac{49}{8}+\frac{1}{8}(4+10 j)^2$                 & $3$&$-\frac{1}{8}+\frac{1}{8}(\frac{6}{5}+6 j)^2$ \\\hline
$-4$&$-\frac{9}{2}+\frac{1}{2}(\frac{4}{3}+4j)^2$	& $3$&$-\frac{1}{8}+\frac{1}{8}(\frac{12}{5}+6 j)^2$ \\\hline
$-3$&$-\frac{25}{8}+\frac{1}{8}(2+6 j)^2$	& $4$&$-\frac{1}{2}+\frac{1}{2}(\frac{4}{3}+4 j)^2$ \\\hline
$-3$&$-\frac{25}{8}+\frac{1}{8}(\frac{3}{2}+6 j)^2$	& $5$&$-\frac{9}{8}+\frac{1}{8}(\frac{10}{3}+10 j)^2$ \\\hline
$-3$&$-\frac{25}{8}+\frac{1}{8}(\frac{6}{5}+6 j)^2$	& $5$&$-\frac{9}{8}+\frac{1}{8}(4+6 j)^2$ \\\hline
\end{tabular}}
\end{center}}
\end{theorem}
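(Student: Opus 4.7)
The plan is to invoke the general Morales--Ramis criterion on meromorphic integrability of Hamiltonian systems, and to extract from it the constraints on $\lambda$ by analyzing the normal variational equation along a straight-line orbit associated with the Darboux point~$c$.

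First, I would use $c$ to produce an explicit particular solution of~\eqref{eq:H}. By homogeneity of~$V$ and the Darboux condition $\nabla V(c)=kc$, the ansatz $\q(t)=\phi(t)c$ reduces~\eqref{eq:H} to the scalar equation $\ddot\phi=-k\phi^{k-1}$, which is integrable by quadrature via the energy relation $\tfrac12\dot\phi^2+\phi^k=E$. The image of this solution is the invariant complex line $\mathbb{C}\cdot c$, lifted to a (possibly ramified) algebraic curve $\Gamma$ in phase space.

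Next, I would linearize~\eqref{eq:H} along $\Gamma$. Writing $\q=\phi c+\xi$ and expanding to first order, together with the homogeneity identity $\mathrm{Hess}(V)(\phi c)=\phi^{k-2}\mathrm{Hess}(V)(c)$, the variational equation becomes $\ddot\xi=-\phi^{k-2}\,\mathrm{Hess}(V)(c)\,\xi$. Euler's identity applied twice yields $\mathrm{Hess}(V)(c)\cdot c=(k-1)\,\nabla V(c)=k(k-1)c$, so $c$ is an eigenvector with a known eigenvalue. The component along $c$ is the tangential variational equation and contributes no obstruction; after splitting it off, the remaining normal component diagonalizes to a scalar equation $\ddot\eta=-\lambda\,\phi^{k-2}\eta$, where $\lambda$ is the second eigenvalue of the Hessian at~$c$.

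Then I would apply the classical Yoshida change of variable, essentially $x=\phi^k/E$, which converts this normal variational equation into a Gauss hypergeometric equation whose three local exponent differences at $0,1,\infty$ are explicit simple functions of $k$ and $\lambda$. By the general Morales--Ramis theorem, rational (in fact, meromorphic) integrability of $V$ forces the identity component of the differential Galois group of this equation to be abelian; since the group sits inside $\mathrm{SL}_2(\mathbb{C})$ it must then be virtually solvable. I would conclude by invoking Kimura's classification~\cite{Kimura70} of Gauss hypergeometric equations with solvable Galois group, translating each admissible Schwarz-type entry back into an arithmetic condition on $(k,\lambda)$. The two generic rows of the table correspond to the two infinite families in Kimura's list (parametrized by the integer~$j$), while the eight sporadic rows correspond to its finite exceptional cases, which only survive for $k\in\{\pm 3,\pm 4,\pm 5\}$.

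The main obstacle, and the step where real care is needed, is this last translation: one must track carefully how the local exponents at $0,1,\infty$ depend on $k$ and $\lambda$, discard spurious entries where the hypergeometric equation degenerates, and explain why the values $k\in\{-2,0,2\}$ must be excluded from the statement. These are precisely the degrees for which the Yoshida change of variable or the hypergeometric reduction breaks down (the exponent of $\phi$ in the coefficient vanishes or the substitution ceases to be a covering of the expected type), so that the Morales--Ramis criterion gives no constraint in the present form; this is why the theorem is stated under $k\neq-2,0,2$.
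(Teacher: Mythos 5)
The paper does not actually prove this theorem: it is quoted from Morales--Ramis \cite{MoRa01} (with all eigenvalues rescaled by $k$ to match the normalization $\nabla V(c)=kc$ of Definition~\ref{darb}), and the text only sketches the underlying idea --- straight-line orbit through the Darboux point, linearization, reduction of the normal variational equation to a Gauss hypergeometric equation, and Kimura's classification \cite{Kimura70}. That is exactly the route you follow, and your outline is the standard proof; your closing caveat about carefully tracking the exponents (and hence the factor of $k$ in the table) is well placed.

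There is, however, one genuine gap. Your reduction of the normal variational equation to a scalar equation $\ddot\eta=-\lambda\,\phi^{k-2}\eta$ silently assumes that the Hessian $\nabla^2V(c)$ is diagonalizable. For a non-isotropic Darboux point this is automatic, since $(-c_2,c_1)^t$ is a second eigenvector (cf.\ Proposition~\ref{prop:spectrum}); but the statement allows any $c\in\mathbb{C}^2\setminus\{0\}$, and when $c_1^2+c_2^2=0$ the vector $(-c_2,c_1)^t$ is proportional to $c$, so the Hessian may have a nontrivial Jordan block at the double eigenvalue $k(k-1)$ and the variational equation no longer splits into two independent scalar equations. This is precisely why the original statement in \cite{MoRa01} carries a diagonalizability hypothesis, and why the paper explicitly remarks that the hypothesis can be removed only by invoking \cite[Theorem~1.3(1)]{DuMa09}, which treats the Jordan-block case separately. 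As written, your argument either needs that refinement or must add the diagonalizability assumption to the statement.
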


This table will be called throughout the article the \emph{Morales-Ramis table}.
For a fixed homogeneity degree $k$, we will denote by $E_k$ the infinite set of \emph{allowed eigenvalues $\lambda$\/} in the table, corresponding to $k$.

\smallskip
Note two differences with the classical statement of the Morales-Ramis
theorem. First, due to our choice of the normalization constant in
Definition~\ref{darb} ($k$ instead of $1$), the eigenvalues
displayed in the previous table are $k$ times larger than those
of~\cite[Theorem~3]{MoRa01}.  Our choice is motivated by the fact that
it simplifies the computations, and it has the nice and useful
property that the eigenvalue sets in the table are lower bounded. 
Second, both the original proof~\cite{MoRa01} and the statement of the
Morales-Ramis theorem~\cite[Theorem~1.2]{DuMa09}, require the
additional assumption that the Hessian matrix of $V$ at $c$ is
diagonalizable; but in fact, \cite[Theorem~1.3(1)]{DuMa09} shows that this hypothesis is not necessary. 

\medskip
We now illustrate the basic notion of Darboux points and the use of Theorem~\ref{thm:morales} on a toy parametrized example. The example is simple enough so that the eigenvalues are accessible by a direct computation. 

\begin{example} Consider the homogeneous potential
\begin{equation*}\label{eqex}
V(\A, \q_1, \q_2)=(\a_1 \q_1+\a_2\q_2)(\q_1^2+\q_2^2).
\end{equation*}
The homogeneity degree is $k=3$ and the Darboux point equation~\eqref{eqdarb}~is
\[
3\a_1\c_1^2+\a_1\c_2^2+2\a_2\c_1\c_2=3\c_1, \;\; 3\a_2\c_2^2+\a_2\c_1^2+2\a_1\c_1\c_2=3\c_2.
\]
For parameter values $(a_1, a_2)\in \mathbb{C}^2$ such that $a_1^2+a_2^2\neq 0$, its solutions $c=(c_1,c_2)\in\mathbb{C}^2\setminus \{0\}$ read 
$$c=\left(\dfrac{a_1}{a_1^2+a_2^2},\; \dfrac{a_2}{a_1^2+a_2^2}\right),\quad c = \left(\dfrac{3}{2(a_1\pm ia_2)},\dfrac{\pm 3i}{2(a_1\pm ia_2)}\right).$$
 The Hessian matrices at these points are
$$\begin{pmatrix} \dfrac{2 (3 a_1^2+a_2^2)}{a_1^2+a_2^2}&\hspace{-0.3cm}\dfrac{4a_1a_2}{a_1^2+a_2^2}\\& \\ \hspace{-0.5cm}\dfrac{4a_1a_2}{a_1^2+a_2^2}&\hspace{-0.3cm}\dfrac{2(a_1^2+3a_2^2)}{a_1^2+a_2^2}\\\end{pmatrix},\; 
\begin{pmatrix} \dfrac{3(3a_1 \pm ia_2)}{a_1\pm ia_2}&\hspace{-0.2cm}\dfrac{3(a_2 \pm ia_1)}{a_1\pm ia_2}\\ & \\ \hspace{-0.3cm} \dfrac{3(a_2 \pm ia_1)}{a_1\pm ia_2}&\hspace{-0.2cm}\dfrac{3(a_1 \pm 3ia_2)}{a_1\pm ia_2}\\\end{pmatrix}.
$$

\noindent The eigenvalues of the first matrix are $\{6,2\}$. The (a priori unexpected) fact that none of them depend on the parameter values $a_1,a_2$ comes from a relation on eigenvalues at Darboux points that will be proved later (Theorem \ref{Macie}). Now, Theorem~\ref{thm:morales} tells us that $E_3$, the set of  allowed eigenvalues for homogeneity degree $k=3$, is the set of the rational numbers of the form
\begin{align*}
\frac{3}{2}j(3j+1),\; \frac{1}{2}(3j+1)(3j+2), \;-\frac{1}{8}+\frac{1}{8}\left(6 j+\frac{12}{5}\right)^2
,\\ -\frac{1}{8}+\frac{1}{8}\left(6 j + \dfrac32\right)^2,
-\frac{1}{8}+\frac{1}{8}\left(6 j+ \frac{6}{5}\right)^2, -\frac{1}{8}+\frac{1}{8}(6 j+2)^2,
\end{align*}
where $j \in \mathbb{Z}$.
The eigenvalue $\lambda=6$ is allowed (by choosing $j=1$ in the first sequence), but the eigenvalue $\lambda=2$ is not. This can be seen by simply solving for integers six quadratic equations.
Thus, by Theorem~\ref{thm:morales}, the potential $V(a_1, a_2, \q_1, \q_2)$  is not
integrable when $a_1^2+a_2^2\neq 0$, and a necessary condition for integrability is $a_1^2+a_2^2=0$. 
\end{example}


\section{Polar representation}\label{sec:polar}

We will use complex polar coordinates in order to represent a given
rational homogeneous potential $V$ in a simpler way, by a pair
$(F,k)$, where $F$ is a univariate rational function, and $k$ is an
integer. In this new representation, various quantities
attached to $V$, such as Darboux points and eigenvalues of the Hessian
of $V$, are much easier to express, including a useful relation
\eqref{eqMac} on these eigenvalues.  This representation has already
been used for non-integrability proofs~\cite{SVF97,MOY10,VSF03}. This
section provides an overview on some results on polar coordinates with
useful properties needed to prove our algorithm (see Theorem
\ref{thm:morales_polar} below).

In the rest of the article, we will use the notation $\Delta$ and $D$ for the following subdomains of $\mathbb{C}^2$:
\begin{align*}
\Delta & =   \mathbb{C}^*\times \{\theta\in\mathbb{C},\, \; 0\leq \Re(\theta)<2\pi\}, \\ D & =  \{(q_1,q_2)\in\mathbb{C}^2, \; q_1^2+q_2^2\neq 0\},
\end{align*}
and $\varphi$ for the map $\varphi:\Delta\rightarrow D$ defined by $\varphi(r,\theta) = (r\cos\theta,r\sin\theta)$.

\begin{prop}\label{prop:polar-coord}
The map $\varphi$ is differentiable on $\Delta$, and its image is a double covering of $D$ (i.e., each fiber $\varphi^{-1}(q_1,q_2)$ contains exactly two points).
\end{prop}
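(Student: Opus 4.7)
The plan is to verify the two assertions separately and directly from the formula $\varphi(r,\theta)=(r\cos\theta,r\sin\theta)$, treating $r,\theta$ as complex variables throughout.

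For differentiability, I would simply observe that $\cos$ and $\sin$ are entire functions of the complex variable $\theta$, so the two coordinate functions $r\cos\theta$ and $r\sin\theta$ are polynomial in $r$ and entire in $\theta$; in particular $\varphi$ is holomorphic, hence differentiable, on all of $\Delta$. This step is routine and should take just one sentence.

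For the double covering property, I would fix $(q_1,q_2)\in D$ and count the preimages explicitly. Using the complex trigonometric identity $\cos^2\theta+\sin^2\theta=1$ (which persists for complex arguments), any preimage $(r,\theta)$ must satisfy
\[
r^2 = (r\cos\theta)^2+(r\sin\theta)^2 = q_1^2+q_2^2.
\]
Since $(q_1,q_2)\in D$ the right-hand side is nonzero, so $r$ takes exactly the two nonzero values $\pm\sqrt{q_1^2+q_2^2}$, both lying in $\mathbb{C}^*$. For each such choice of $r$ I would then use the Euler-type identity $\cos\theta+i\sin\theta=e^{i\theta}$ to rewrite the system $r\cos\theta=q_1$, $r\sin\theta=q_2$ as the single exponential equation
\[
e^{i\theta}=\frac{q_1+iq_2}{r}.
\]
The right-hand side is a nonzero complex number (since $(q_1+iq_2)(q_1-iq_2)=q_1^2+q_2^2\neq 0$ and $r\neq 0$), so writing it in polar form $\rho e^{i\alpha}$ with $\alpha\in[0,2\pi)$ gives the general solution $\theta=\alpha-i\log\rho+2\pi k$, $k\in\mathbb{Z}$. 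The normalization condition $0\le\Re(\theta)<2\pi$ pins down $k=0$, hence a unique $\theta$ for each $r$. This yields exactly two preimages in $\Delta$, proving the fiber claim.

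The argument has no real obstacle; the only care needed is conceptual, namely not to confuse the complex ``radius'' $r\in\mathbb{C}^*$ with a real modulus, and to remember that the identity $\cos^2+\sin^2=1$ and the exponential formula for $\cos+i\sin$ are purely algebraic and so remain valid for complex $\theta$. Once these are invoked, the counting of preimages reduces to the standard fact that the complex exponential is surjective onto $\mathbb{C}^*$ with fibers a coset of $2\pi i\mathbb{Z}$, which the strip condition $0\le\Re(\theta)<2\pi$ cuts down to a single representative.
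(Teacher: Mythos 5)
Your proof is correct and follows essentially the same route as the paper's: differentiability of the coordinate functions, the identity $r^2=q_1^2+q_2^2$ to fix $r$ up to sign, and the equation $e^{i\theta}=(q_1+iq_2)/r$ together with the strip condition $0\le\Re(\theta)<2\pi$ to pin down $\theta$ uniquely. Your added remarks on the surjectivity of the complex exponential and on not confusing the complex $r$ with a real modulus only make explicit what the paper leaves implicit.
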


\begin{proof}
The functions $(r, \theta) \mapsto r\cos\theta$ and $ (r, \theta) \mapsto r\sin\theta$ are differentiable on $\Delta$, and thus $\varphi$ is differentiable on $\Delta$. The relation $(r\cos\theta)^2+(r\sin\theta)^2=r^2$ implies that the image of $\varphi$ is contained in~$D$. Let us compute the inverse of~$\varphi$. If $q_1 = r\cos\theta$ and $q_2 = r\sin\theta$ with $(r,\theta)\in \Delta$, then
$r^2=q_1^2+q_2^2$ and $e^{i\theta}=(q_1+iq_2)/r.$
The first relation determines $r$ up to a sign; since
$q_1^2+q_2^2\neq 0$ there are always exactly two possible choices $\pm
r$.  After this sign choice, $e^{i\theta}$ is uniquely
determined, thus $\theta$ is determined up to translation by
$2\pi$. Since  $0\leq \Re(\theta)<2\pi$, then
$\theta$ is uniquely determined.
\end{proof}

\begin{prop}\label{prop:polarV}
Any homogeneous potential $V\in\mathbb{C}(\q_1,\q_2)$ can be written in complex polar coordinates
\begin{equation}\label{represent}
V(q_1,q_2) = r^k F(e^{i\theta}), \qquad \text{for} \quad (q_1,q_2) = \varphi(r,\theta),
\end{equation}
where $k$ is the homogeneity degree of $V$, and $F$ is a rational function in $\mathbb{C}(\z)$ having the same parity as $k$.

Moreover, if $V\in\mathbb{Q}(\q_1,\q_2)$, then $F$ belongs to $\mathbb{Q}(i)(\z)$.
\end{prop}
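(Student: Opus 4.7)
The plan is to exploit homogeneity to reduce everything to the behavior of $V$ on the unit ``circle'' $r=1$, and then use the exponential substitution $z=e^{i\theta}$ to convert trigonometric expressions into rational ones.

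More precisely, since $V$ is homogeneous of degree $k$, we have $V(r\cos\theta,r\sin\theta)=r^k\,V(\cos\theta,\sin\theta)$ for every $(r,\theta)\in\Delta$ such that $\varphi(r,\theta)$ lies in the domain of $V$. Setting $z=e^{i\theta}$, the classical identities
\[
\cos\theta=\tfrac{1}{2}\bigl(z+z^{-1}\bigr),\qquad \sin\theta=\tfrac{1}{2i}\bigl(z-z^{-1}\bigr)
\]
show that $(\cos\theta,\sin\theta)$ is a pair of rational functions of $z$, so substituting them into the rational expression defining $V$ yields a rational function $F(z)\in\mathbb{C}(\z)$ with the desired identity $V(q_1,q_2)=r^k F(e^{i\theta})$. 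This substitution only uses the coefficients of $V$ together with $\tfrac12$ and $\tfrac{1}{2i}$; hence if $V\in\mathbb{Q}(\q_1,\q_2)$, then $F\in\mathbb{Q}(i)(\z)$, proving the last claim.

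To establish the parity statement, the trick is to observe that replacing $z$ by $-z$ has a direct geometric meaning on the original coordinates. Indeed, $(z+z^{-1})/2$ and $(z-z^{-1})/(2i)$ are both odd in $z$, so the substitution $z\mapsto -z$ corresponds to $(\cos\theta,\sin\theta)\mapsto(-\cos\theta,-\sin\theta)$. Using homogeneity of $V$ once more, $V(-\cos\theta,-\sin\theta)=(-1)^k V(\cos\theta,\sin\theta)$, and hence $F(-z)=(-1)^kF(z)$. This is exactly the assertion that $F$ has the same parity as $k$.

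The only nontrivial point to be careful about is ensuring that $F$ is well-defined as a rational function rather than merely as a function of $\theta$: one must check that the formal substitution $q_1\leftarrow (z+z^{-1})/2$, $q_2\leftarrow (z-z^{-1})/(2i)$ in the numerator and denominator of $V$ does not produce a zero denominator identically. This follows because $\varphi(\Delta)=D$ (Proposition~\ref{prop:polar-coord}) and the denominator of $V$ is not identically zero on $D$; therefore its image under the substitution, which is a nonzero rational function in $z$, gives a valid denominator for $F$. Once this point is cleared, the three required properties—the polar representation, the parity, and the $\mathbb{Q}(i)$-rationality—follow immediately from the above calculations.
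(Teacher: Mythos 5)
Your proof is correct and follows essentially the same route as the paper: define $F(\z)=V\bigl(\tfrac{\z+\z^{-1}}{2},\tfrac{\z-\z^{-1}}{2i}\bigr)$, use homogeneity to pull out $r^k$, deduce the parity from $F(-\z)=(-1)^kF(\z)$, and note that the substitution only introduces coefficients in $\mathbb{Q}(i)$. Your extra check that the substituted denominator is not identically zero (which by homogeneity reduces to non-vanishing on the unit conic $q_1^2+q_2^2=1$) is a sound point of care that the paper leaves implicit.
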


\begin{proof}
Let $F$ be the rational function in $\mathbb{C}(\z)$ defined by
$$F(\z)=V\left(\frac{\z+\z^{-1}}{2},\frac{\z-\z^{-1}}{2i}\right).$$
Then $F(e^{i\theta})$ is equal to $V(\cos\theta,\sin\theta)$, and homogeneity of $V$ allows to conclude that for $q_1 = r \cos(\theta)$ and $q_2 = r \sin(\theta)$ we have 
\[V(q_1,q_2) = r^k V(\cos\theta,\sin\theta)= r^k F(e^{i\theta}).\]
 Using again that $V$ is $k$-homogeneous, we obtain that
\begin{align*}
F(-\z)= \, V\left(-\frac{\z+\z^{-1}}{2},-\frac{\z-\z^{-1}}{2i}\right)
= \, (-1)^kF(\z),
\end{align*}
and thus $F$ has the same parity as $k$. 
The last assertion is obvious by definition of $F$.
\end{proof}

Proposition~\ref{prop:polarV} shows that the homogeneous rational potential $V$ is represented in polar coordinates by a pair $(F,k)$, where $F$ is a univariate rational function, and $k$ is an integer. 
We now write the equation of a Darboux point~$c\in D$ of a potential $V$ and the eigenvalues of the corresponding Hessian matrix $\nabla^2 V(c)$ in polar coordinates.

\begin{prop}\label{prop:spectrum}
  Let $V\in\mathbb{C}(\q_1,\q_2)$ be a homogeneous potential with
  polar representation $(F,k)$, and let $c=(c_1,c_2)\in D$ be a 
  Darboux point for $V$. Then for $(r,\theta) \in
  \varphi^{-1}(c)$ we have
\begin{equation}\label{eqdarb2}
F'(e^{i\theta})=0\qquad \text{and} \qquad F(e^{i\theta})=r^{2-k}.
\end{equation}
Moreover, $(c_1,c_2)^t$ and $(-c_2,c_1)^t$ are eigenvectors of the Hessian matrix $\nabla^2 V(c)$, with respective eigenvalues
\begin{equation}\label{eq:SpecHessian}
 k(k-1)\qquad \text{and} \qquad k-\frac{e^{2i\theta}F''(e^{i\theta})}{F(e^{i\theta})}.
\end{equation}
\end{prop}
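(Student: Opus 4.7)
The plan is to proceed in three stages: derive the Darboux equations in polar form by the chain rule; identify the first eigenvector via Euler's identity; and obtain the second eigenvector from a trace computation using the polar expression of the Laplacian.

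First I would pull back by $\varphi$ and set $\tilde V(r,\theta)=r^kF(e^{i\theta})$. The chain rule gives
\[
V_{q_1}=\cos\theta\, \tilde V_r-\tfrac{\sin\theta}{r}\tilde V_\theta,\qquad V_{q_2}=\sin\theta\, \tilde V_r+\tfrac{\cos\theta}{r}\tilde V_\theta,
\]
with $\tilde V_r=kr^{k-1}F(e^{i\theta})$ and $\tilde V_\theta=ir^ke^{i\theta}F'(e^{i\theta})$. The Darboux system~\eqref{eqdarb} reads $V_{q_1}=kc_1=kr\cos\theta$ and $V_{q_2}=kr\sin\theta$. Taking the $\cos\theta$-combination plus the $\sin\theta$-combination yields the radial equation $kr^{k-1}F(e^{i\theta})=kr$, hence $F(e^{i\theta})=r^{2-k}$; the orthogonal combination yields $ir^{k-1}e^{i\theta}F'(e^{i\theta})=0$, hence $F'(e^{i\theta})=0$. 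This settles~\eqref{eqdarb2}.

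Next, Euler's identity $q_1V_{q_1}+q_2V_{q_2}=kV$ for the $k$-homogeneous $V$, differentiated once in $q_1$ and once in $q_2$, gives
\[
q_1V_{q_1q_1}+q_2V_{q_1q_2}=(k-1)V_{q_1},\quad q_1V_{q_1q_2}+q_2V_{q_2q_2}=(k-1)V_{q_2}.
\]
Evaluating at $c$ and using $V_{q_i}(c)=kc_i$, these two relations read exactly $\nabla^2V(c)\cdot(c_1,c_2)^t=k(k-1)(c_1,c_2)^t$, proving that $(c_1,c_2)^t$ is an eigenvector with eigenvalue $k(k-1)$. For the other direction, I would compute the trace of the Hessian via the polar formula $\Delta V=\tilde V_{rr}+r^{-1}\tilde V_r+r^{-2}\tilde V_{\theta\theta}$. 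A short calculation (with $d/d\theta$ pulling out an extra factor $ie^{i\theta}$) gives
\[
\tilde V_{\theta\theta}=-r^k\bigl(e^{i\theta}F'(e^{i\theta})+e^{2i\theta}F''(e^{i\theta})\bigr),
\]
so $\Delta V=r^{k-2}\bigl(k^2F-e^{i\theta}F'-e^{2i\theta}F''\bigr)$. Specialising at $c$ via $F'(e^{i\theta})=0$ and $r^{k-2}=1/F(e^{i\theta})$ yields
\[
\operatorname{tr}\nabla^2V(c)=k^2-\frac{e^{2i\theta}F''(e^{i\theta})}{F(e^{i\theta})}.
\]

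To conclude, I would show that $v_2:=(-c_2,c_1)^t$ is the second eigenvector. It is symmetric-bilinear-orthogonal to $v_1:=(c_1,c_2)^t$, and since $c\in D$ one has $v_1^tv_1=v_2^tv_2=c_1^2+c_2^2\neq0$. Writing $H=\nabla^2V(c)$ and decomposing $Hv_2=\alpha v_1+\beta v_2$, symmetry of $H$ gives $\alpha(v_1^tv_1)=v_1^tHv_2=(Hv_1)^tv_2=k(k-1)\,v_1^tv_2=0$, so $\alpha=0$ and $v_2$ is an eigenvector. The associated eigenvalue is then $\operatorname{tr}H-k(k-1)=k-e^{2i\theta}F''(e^{i\theta})/F(e^{i\theta})$, as claimed. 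The main obstacle is this last step: in the complex setting, a symmetric matrix need not be diagonalizable, so the second eigenvector cannot be produced by abstract spectral theory; the non-degeneracy $c_1^2+c_2^2\neq0$ built into $D$ is exactly what prevents the candidate eigenvectors from being isotropic and legitimises the bilinear-orthogonality argument.
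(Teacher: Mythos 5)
Your proof is correct, and the first two thirds (the polar form of the Darboux equations via the chain rule, and the eigenvector $(c_1,c_2)^t$ via the differentiated Euler identity) coincide with the paper's argument up to notation. Where you genuinely diverge is the second eigenpair. The paper differentiates the companion identity $\q_1 \frac{\partial V}{\partial \q_2} - \q_2 \frac{\partial V}{\partial \q_1} = i\,r^k z F'(z)$ with respect to $\q_1$ and $\q_2$ and evaluates at $c$; this produces $\nabla^2 V(c)\cdot v - kv = -r^{k-2}e^{2i\theta}F''(e^{i\theta})\,v$ for $v=(-c_2,c_1)^t$ in a single stroke, giving the eigenvector property and the eigenvalue simultaneously without invoking symmetry of the Hessian or any trace identity. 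You instead establish that $v$ is an eigenvector by decomposing $Hv=\alpha v_1+\beta v$ in the basis $\{v_1,v\}$ (valid since $\det = c_1^2+c_2^2\neq 0$) and killing $\alpha$ via $v_1^t H v = (Hv_1)^t v = k(k-1)\,v_1^t v = 0$, then recover the eigenvalue as $\operatorname{tr}\nabla^2 V(c)-k(k-1)$ from the polar Laplacian $\Delta V = r^{k-2}\bigl(k^2F - e^{i\theta}F' - e^{2i\theta}F''\bigr)$; I checked this trace computation and it is correct. Your route is slightly longer and rests on the polar Laplacian formula remaining valid over $\mathbb{C}$ (which it does, being a formal chain-rule identity on the domain where $\varphi$ is differentiable and $r\neq 0$), but it has the merit of isolating exactly why the restriction to $D$ matters: the non-isotropy $c_1^2+c_2^2\neq 0$ is what prevents the bilinear-orthogonality argument from degenerating, a point the paper's direct computation never needs to surface. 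Both arguments are complete; the paper's is the more economical, yours the more structurally transparent.
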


\begin{proof}
We start from the relations
\[V(\q_1,\q_2) = \r^k \, F(\z), \quad \r^2 = {\q_1^2 + \q_2^2}, \quad \z = \frac{\q_1 + i \q_2}{\r} = e^{i \T}. \]
{}From there we deduce, by differentiation, the equalities 
\[ \frac{\partial \r}{\partial \q_1} = \frac{\q_1}{\r}, \quad \frac{\partial  \r}{\partial \q_2} = \frac{\q_2}{\r}, \quad \frac{\partial \z}{\partial \q_1} = \frac{-i\q_2\z}{\r^2}, \quad \frac{\partial \z}{\partial \q_2} = \frac{i\q_1\z}{\r^2}.\] 
These equalities imply that the derivatives of $V$ write
\begin{align}
\frac{\partial V}{\partial \q_1 } = \r^{k-2} \left( k \q_1 F(\z)-i \q_2 \z F'(\z) \right),\label{eq:derV1}\\
\frac{\partial V}{\partial \q_2 } = \r^{k-2} \left( k \q_2 F(\z)+i \q_1 \z F'(\z) \right). \label{eq:derV2}
\end{align}
Combining these last two equations yields
\begin{equation}\label{eq:Euler}
\q_1 \frac{\partial V}{\partial \q_1 } + \q_2 \frac{\partial V}{\partial \q_2 } = k V, \;\;\;
\q_1 \frac{\partial V}{\partial \q_2 } - \q_2 \frac{\partial V}{\partial \q_1 } = i \, \r^{k} \, \z \, F'(\z). 
\end{equation}
(The first one is Euler's relation for $k$-homogeneous functions.) 
Evaluating equalities~\eqref{eq:Euler} at the Darboux point~$c$, and using the  Darboux point equation \eqref{eqdarb}, yields the proof of~\eqref{eqdarb2}. 

Let us now prove the last assertion of the proposition.
By differentiating the first equality in~\eqref{eq:Euler} with respect to~$\q_1$ and~$\q_2$, by evaluating at~$c$, and by using~\eqref{eqdarb2}, we obtain $\nabla^2V(c).c^t+kc^t=k^2c^t$. Thus $c^t$ is an eigenvector of $\nabla^2V(c)$, with corresponding eigenvalue $k(k-1)$.

Similarly, differentiating the second equality in~\eqref{eq:Euler} and specializing the result at $c$ yields $\nabla^2V(c).v -kv=-r^{k-2}e^{2i\theta} F'' (e^{i\theta}).v$, where $v$ denotes the vector $(-c_2,c_1)^t$.
This concludes the proof.
\end{proof}

Proposition \ref{prop:spectrum} motivates the following definition of
\emph{Darboux points in polar representation}, and of \emph{associated eigenvalues}.
\vspace{-0.5em}

\begin{definition}\label{darbF}
  Let $(F,k)$ be the polar representation of a homogeneous potential
  $V\in \mathbb{C}(\q_1,\q_2).$ A complex number $z\in
  \mathbb{C}\setminus \{0\}$ is called a \emph{Darboux point of $F$} if
  $F'(z)=0$ and $F(z)\neq 0$.
A Darboux 
point $z\in
\mathbb{C}\setminus \{0\}$ is said to be \emph{multiple} if $z$ is a
multiple root of $F'$; else it is said to be \emph{simple}.

If $z\in \mathbb{C}\setminus \{0\}$ is a Darboux point
for $F$, we will call \emph{associated eigenvalues} the values
$k(k-1)$ and $ k-{z^2 F''(z)}/{F(z)}$.
\end{definition}

The map $\varphi$ naturally sends Darboux points in polar representation to Darboux points in Cartesian coordinates in $D$, also carrying the definition of \emph{associated eigenvalues}.

We now prove the main result of this subsection; it gives a necessary
condition for integrability of a homogeneous potential in terms of its
polar representation. We recall that $E_k$ is the set of allowed
values in the Morales-Ramis table for degree $k$.

\begin{theorem}\label{thm:morales_polar}
  Let $V\in \mathbb{C}(\q_1,\q_2)$ be a homogeneous potential with polar
  representation $(F,k)$ and let $\Lambda$ be the following set
\begin{equation}\label{Lambda:def}
\Lambda(F,k) := \left\lbrace k-\frac{z^2F''(z)}{F(z)} \quad \Big| \quad z\neq 0,\;F'(z)=0,\; F(z)\neq 0\right\rbrace.
\end{equation}
Let $\Sp_D(\nabla^2 V)$ denote the union of the sets $\Sp(\nabla^2 V(c))$ taken over all Darboux points $c\in D$ of $V$. Then 
\begin{equation}\label{eq:SpD-Lambda}
\{k(k-1)\} \cup \Sp_D(\nabla^2 V) = \{k(k-1)\} \cup \Lambda.
\end{equation}
Moreover, if $V$ is integrable, then $\Lambda \subseteq E_k$.
\end{theorem}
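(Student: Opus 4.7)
My plan is to establish the set-theoretic identity~\eqref{eq:SpD-Lambda} by proving both inclusions, and then to deduce $\Lambda \subseteq E_k$ as an immediate consequence of Theorem~\ref{thm:morales}. All of the work rests on Proposition~\ref{prop:spectrum}, applied in both a forward and a backward direction.

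For the inclusion $\{k(k-1)\} \cup \Sp_D(\nabla^2 V) \subseteq \{k(k-1)\} \cup \Lambda$, I would take an arbitrary Darboux point $c \in D$ of $V$ and pick $(r,\theta) \in \varphi^{-1}(c)$. Setting $z = e^{i\theta}$, Proposition~\ref{prop:spectrum} gives $F'(z)=0$ and $F(z) = r^{2-k}$, which is nonzero since $r \neq 0$. The same proposition identifies the two eigenvalues of $\nabla^2 V(c)$ as $k(k-1)$ and $k - z^2 F''(z)/F(z)$, the second of which belongs to $\Lambda$ by definition. Taking the union over all Darboux points gives the inclusion.

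For the reverse inclusion the key move is to realize each element of $\Lambda$ as the non-trivial eigenvalue at some Darboux point. Given $z \in \mathbb{C}^*$ with $F'(z) = 0$ and $F(z) \neq 0$, I would pick the unique $\theta \in \mathbb{C}$ with $0 \leq \Re(\theta) < 2\pi$ and $e^{i\theta} = z$ (existence as in the proof of Proposition~\ref{prop:polar-coord}), and, using the standing hypothesis $k \neq 2$, choose any nonzero $(2-k)$-th root $r$ of $F(z)$, so that $r^{2-k}=F(z)$ and $(r,\theta) \in \Delta$. Set $c = \varphi(r,\theta) \in D$. The key check is that $c$ is indeed a Darboux point: substituting $F'(z)=0$ and $F(z)=r^{2-k}$ into the formulas~\eqref{eq:derV1}--\eqref{eq:derV2} collapses them to $\partial V/\partial \q_i(c) = k c_i$ for $i=1,2$. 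A second application of Proposition~\ref{prop:spectrum} at $c$ then produces $k - z^2 F''(z)/F(z)$ as an eigenvalue of $\nabla^2 V(c)$, placing it in $\Sp_D(\nabla^2 V)$. In particular $\Lambda \subseteq \Sp_D(\nabla^2 V)$, which implies the missing inclusion.

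With~\eqref{eq:SpD-Lambda} in hand, the integrability claim is immediate: if $V$ is integrable, Theorem~\ref{thm:morales} forces every element of $\Sp_D(\nabla^2 V)$ to lie in $E_k$, and since the second step shows $\Lambda \subseteq \Sp_D(\nabla^2 V)$, we conclude $\Lambda \subseteq E_k$. The main obstacle I anticipate is the reverse construction of a Darboux point from a bare zero of $F'$; while the existence of $\theta$ and $r$ is guaranteed by Proposition~\ref{prop:polar-coord} and by $k\neq 2$, one must carefully verify via~\eqref{eq:derV1}--\eqref{eq:derV2} that the resulting $c$ really solves~\eqref{eqdarb}, as this is exactly where the hypotheses $F'(z)=0$, $F(z)\neq 0$ and $k\neq 2$ all enter the argument simultaneously.
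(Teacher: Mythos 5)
Your proposal is correct and follows essentially the same route as the paper: both inclusions via Proposition~\ref{prop:spectrum}, with the reverse direction obtained by constructing a Darboux point $c=\varphi(r,\theta)$ from a zero $z$ of $F'$ using $r^{2-k}=F(z)$ and equations~\eqref{eq:derV1}--\eqref{eq:derV2}, and the final claim from Theorem~\ref{thm:morales}. Your only (harmless) variation is that you prove the slightly stronger inclusion $\Lambda\subseteq\Sp_D(\nabla^2 V)$ and deduce $\Lambda\subseteq E_k$ directly, whereas the paper routes through the set equality and the observation that $k(k-1)\in E_k$.
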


\begin{proof}
  We first prove equality \eqref{eq:SpD-Lambda}.
  Proposition~\ref{prop:spectrum} readily yields the inclusion
  $\{k(k-1)\} \cup \Sp_D(\nabla^2 V) \subseteq \{k(k-1)\} \cup
  \Lambda.$ Indeed, if $\lambda$ is in $\Sp_D(\nabla^2 V)\setminus
  \{k(k-1)\}$, then there exists a Darboux point $c\in
  D$ of $V$ such that $\lambda \in \Sp(\nabla^2 V(c))\setminus
  \{k(k-1)\}$. Then letting $(r,\theta) \in \varphi^{-1}(c)$,
  Proposition~\ref{prop:spectrum} implies that $z=e^{i\theta}$
  satisfies $z\neq 0$, $F'(z)=0$, $F(z)\neq 0$ and $\Sp(\nabla^2 V(c))
  = \{ k(k-1), k - z^2 F''(z)/F(z)\}$. Therefore, $\lambda$ is equal
  to $k - z^2 F''(z)/F(z)$, and thus it belongs to $\Lambda$.

  Conversely, let $\lambda$ be in $\Lambda \setminus \{k(k-1)\}$.
  There exists a $z \in \mathbb{C}\setminus \{0\}$ such that
  $F'(z)=0$, $F(z)\neq 0$ and $\lambda = k - z^2
  F''(z)/F(z)$.  Write this $z$ as $e^{i \theta}$ for some
  $\theta$ with $\Re(\theta) \in [0,2\pi)$, and write $F(z)$ as
  $r^{2-k}$ with $r \in \mathbb{C}^*$. Then,
  Equations~\eqref{eq:derV1} and \eqref{eq:derV2} imply that $c=
  \varphi(r,\theta)$ is a Darboux point of $V$ in
  $D$. By Proposition~\ref{prop:spectrum}, $\lambda$ belongs to
  $\Sp_D(\nabla^2 V)$. Equality \eqref{eq:SpD-Lambda} is now proven.
		
  To prove the last assertion, assume that $V$ is integrable. Then
  Theorem~\ref{thm:morales} shows that $\Sp_D(\nabla^2 V) \subseteq
  E_k$.  Since the eigenvalue $k(k-1)$ belongs to the Morales-Ramis
  table (first sequence with $j=1$), we also have $\{k(k-1)\} \cup
  \Sp_D(\nabla^2 V) \subseteq E_k$.  The desired inclusion $\Lambda
  \subseteq E_k$ is then a consequence of
  equality~\eqref{eq:SpD-Lambda}.
\end{proof}	

\section{A special Diophantine equation}

\subsection{Sets of possible eigenvalues}\label{secSet1}

There are infinitely many allowed eigenvalues for integrability in the
Morales-Ramis table.  We now prove an interesting equation that
relates the eigenvalues of the Hessian matrix of $V$ at Darboux
points. This will allow us to bound the possible eigenvalues allowed
for integrability.

\begin{definition}
  For a rational function $F\in\mathbb{C}(\z) \setminus \{0\}$, we denote by $k_0=k_0(F)$
  and $k_\infty=k_\infty(F) \in \mathbb{Z}$ two integers such that
  $F(\z) \!\!\underset{z\rightarrow 0}{\sim}\!\! a_0 \z^{k_0}$ and $F(\z)
  \!\!\!\underset{z\rightarrow \infty}{\sim}\!\!\! a_\infty
  \z^{k_\infty}\!$, with $a_0, a_\infty \in\mathbb{C}(\z) \setminus \{0\}$. These integers will be called \emph{asymptotic
    exponents of $F$} (at zero and infinity).
\end{definition}

\begin{theorem}\label{Macie}
  Let $V\in\mathbb{C}(\q_1,\q_2)\setminus \{ 0 \}$ be a homogeneous
  potential with polar representation $(F,k)$,
  let $\Lambda$ be the set defined in~\eqref{Lambda:def} (Theorem~\ref{thm:morales_polar}), counting
  multiplicities, and $k_0,k_\infty \in \mathbb{Z}$ be the asymptotic exponents of~$F$. If $k_0k_\infty\neq 0$ and if~$F$ has only
  simple Darboux points, then:
\begin{equation}\label{eqMac}
\sum\limits_{\lambda\in\Lambda}\frac{1}{\lambda-k}=\frac{1}{k_0}-\frac{1}{k_\infty}.
\end{equation}
\end{theorem}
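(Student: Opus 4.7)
The plan is to prove this by the residue theorem on the Riemann sphere, applied to the auxiliary rational function
\[ h(\z) \;=\; \frac{F(\z)}{\z^{2}\,F'(\z)}. \]
First I would translate the left-hand side of~\eqref{eqMac} into a sum of residues of $h$. Since every element of $\Lambda$ has the form $\lambda = k - z^{2}F''(z)/F(z)$ for some Darboux point $z$ of $F$, we get $(\lambda-k)^{-1} = -F(z)/(z^{2}F''(z))$. By the simple Darboux points hypothesis, each such $z$ is a simple zero of $F'$, so a direct local expansion shows that $h$ has a simple pole at $z$ with residue $F(z)/(z^{2}F''(z))$. Consequently $\sum_{\lambda \in \Lambda}(\lambda - k)^{-1}$ equals $-\sum_{z}\operatorname{res}_{z} h$, where the sum runs over the Darboux points of $F$.

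Next I would argue that the only other poles of $h$ on $\mathbb{P}^{1}(\mathbb{C})$ are at $0$ and $\infty$. The candidates that need inspection are the zeros and poles of $F$ in $\mathbb{C}^{*}$, since these are the only points where either $F'$ can vanish or the numerator of $h$ can blow up outside the Darboux points. At a pole of $F$ of order $m$ at $p\neq 0$ one has $F/F' \sim -(\z-p)/m$, and at a multiple zero of order $m\geq 2$ one has $F/F' \sim (\z-p)/m$; in both cases $F/F'$ is holomorphic at $p$, so $h$ is holomorphic there as well. Simple zeros of $F$ are not zeros of $F'$, so they cause no trouble either.

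Then I would compute the residues at $0$ and $\infty$ using the asymptotic exponents. Since $k_{0}\neq 0$, the expansion $F(\z)\sim a_{0}\z^{k_{0}}$ differentiates to $F'(\z)\sim a_{0}k_{0}\z^{k_{0}-1}$, hence $h(\z)\sim 1/(k_{0}\z)$ near $0$, giving $\operatorname{res}_{0} h = 1/k_{0}$. The analogous computation at infinity, which uses $k_{\infty}\neq 0$, yields $h(\z)\sim 1/(k_{\infty}\z)$ near infinity and therefore $\operatorname{res}_{\infty} h = -1/k_{\infty}$. The residue theorem on $\mathbb{P}^{1}(\mathbb{C})$ then gives
\[ \sum_{z\ \text{Darboux}} \frac{F(z)}{z^{2}F''(z)} \,+\, \frac{1}{k_{0}} \,-\, \frac{1}{k_{\infty}} \;=\; 0, \]
and multiplying by $-1$ recovers~\eqref{eqMac}.

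The only genuinely delicate step is the second paragraph: one must be sure that neither the poles of $F$ nor the multiple zeros of $F$ contribute spurious residues, so that the residue sum at Darboux points really accounts for all finite residues outside $\z=0$. Everything else is a direct asymptotic computation enabled by the two nonvanishing hypotheses on $k_{0}$ and $k_{\infty}$ and by the simplicity of the Darboux points.
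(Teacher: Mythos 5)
Your proposal is correct and follows essentially the same route as the paper: the paper's proof applies the residue theorem to $T(\z)=\z^{-2}F(\z)/F'(\z)$, which is exactly your $h$, with the same classification of poles (Darboux points, the origin, infinity) and the same observation that poles and multiple zeros of $F$ contribute nothing. Your explicit treatment of the multiple-zero case and of the sign of the residue at infinity is, if anything, slightly more careful than the paper's.
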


This result is a generalization of \cite[Theorem 2.3]{MaPr05} for \emph{polynomial} homogeneous potentials, and was already proved under an equivalent form in \cite[Theorem 1.7]{StPr13}. Still, in~\cite{StPr13}, the polar representation is not used, leading to a more complicated description of the set $\Lambda$, a less readable relation \eqref{eqMac} (but not harder to compute in practice) and a more complicated proof. This is why we display here a simple self-contained proof of Theorem \ref{Macie}.

\begin{proof}
Consider the rational function $T(\z)=\z^{-2}{F(\z)}/{F'(\z)}$.
We study its poles. Since $F(\z) \!\!\underset{z\rightarrow 0}{\sim}\!\! a_0 \z^{k_0}$ with $k_0\neq 0$, the origin is a pole of $T$. Moreover, the power series expansion of $T$ at $z=0$ gives $T(\z) = 1/(k_0\z)+o(1),$ which shows that $z=0$ is a pole of order $1$ of $T$, with residue $1/k_0$. Similarly, $z=\infty$ is a pole of order~$1$ of $T$ with residue $1/k_\infty$.

Let now $z_0 \neq 0$ be a finite pole of $T$. It is either a pole of $F$, or a
root of $F'$. Any nonzero pole of $F$ of order $j$ is a pole of $F'$ of order
$j+1$, and thus it is not a pole of $T$. Thus, $z_0$ is necessarily a root
of~$F'$. By the assumption that all Darboux points of $F$ are
simple, $z_0$ is not a root of $F''$. Therefore, $z_0$ is a pole of $T$ of
order~$1$, and the series expansion
$F'(\z)=F''(z_0)(\z-z_0)+o\left((\z-z_0)^2\right)$ shows that the residue of $T$
at $z_0$ is $F(z_0)/(z_0^2 F''(z_0))$. Recognizing this expression as
$1/(k-\lambda)$ for some $\lambda\in \Lambda$, and using Cauchy's residue
formula, proves Equation~\eqref{eqMac}. 
\end{proof}

Theorem~\ref{Macie} contains two assumptions, that $k_0k_\infty\neq 0$
and that Darboux points of $F$ are simple. The first
assumption does not always hold, and then the possible eigenvalues
could be unbounded, as proven for instance by a family of potentials
in~\cite{CoKo12}, for which stronger integrability conditions were
needed (there is a similar difficulty in \cite{MaSi11}). The second
hypothesis (simple Darboux points) is not always
satisfied, but \cite[Theorem 1]{Combot13} provides a classification of
integrable potentials with a multiple Darboux point:
they are invariant by rotation, i.e. with $F$ constant.
This motivates the following definition.

\begin{definition}\label{def:Pi}
  We say that a homogeneous potential $V\in\mathbb{C}(\q_1,\q_2)$ with
  polar representation $(F,k)$ has property $\mathscr{P}$ if it
  satisfies one of the following conditions:
\begin{enumerate}
\item[(1)]\label{Pi:A} All Darboux points of $F$ are simple and the associated eigenvalues belong to the Morales-Ramis table \\ (By Theorem~\ref{thm:morales_polar}, this condition is equivalent to $\Lambda \subseteq E_k\setminus \{k\}$.)
\item[(2)]\label{Pi:B} 
$F$ is finite and nonzero either at the origin, or at infinity (i.e., $k_0k_\infty=0$).
\end{enumerate}  
\end{definition}

Remark that condition (1) includes the case $F=0$ (since then $\Lambda = \emptyset$), and that condition (2) includes the case $F$ constant nonzero.

In the case of odd homogeneity degree $k$, the function
$F$ is odd due to Proposition \ref{prop:polarV}. Thus the asymptotic exponents
$k_0,k_\infty$ are odd, and so condition (2) of
$\mathscr{P}$ cannot occur. Therefore, for odd homogeneity degrees,
$\mathscr{P}$ is equivalent to condition (1), {which matches exactly
the integrability conditions given by 
\cite[Theorem 1]{Combot13} and Theorem \ref{thm:morales_polar}.}

\subsection{Solving the Diophantine equation}\label{sec:Dio}
Equation~\eqref{eqMac} in Theorem~\ref{Macie} provides a constraint on the possible eigenvalues for a homogeneous potential $V$. 
We are thus naturally led to study the equation 
\begin{equation}\label{eq:Dio}
	\sum\limits_{i=1}^p 
	\frac{1}{\lambda_i-k}=c,
\end{equation}
where $c$ is a rational number, $k$ the homogeneity degree of $V$, and $p$  an integer related to the number of Darboux points of~$V$.

Assume that $V$ is integrable and the assumption of Theorem~\ref{Macie} are satisfied.
With our notation, Theorem~\ref{thm:morales_polar} states that $\Lambda \subseteq E_k$, where $E_k$ is the set of allowed values in the Morales-Ramis table for degree~$k$. Since the relation~\eqref{eqMac} holds, 
the aim is to solve equation~\eqref{eq:Dio} for unknowns $\lambda_1, \ldots, \lambda_p$ in $E_k$. We will prove that there are only finitely many solutions of this type.
\vspace{-0.5em}
\begin{prop}\label{prop:bound}
	For any solution $(\lambda_1, \ldots, \lambda_p)$ of the equation~\eqref{eq:Dio} the following holds
$$\min \lambda_i \leq \frac{p}{c}+k \;\hbox{ if } c>0, \quad \text{and}\quad \min \lambda_i \leq k \; \hbox{ if } c\leq 0.$$
\end{prop}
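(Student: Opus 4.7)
The plan is to prove both cases by a direct case analysis on the sign of the minimal $\lambda_i$ relative to $k$, using only the monotonicity of $t \mapsto 1/t$ on the positive reals.

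First I would handle the case $c \leq 0$. Assume by contradiction that $\min_i \lambda_i > k$. Then for every $i$ we have $\lambda_i - k > 0$, hence each summand $1/(\lambda_i - k)$ is strictly positive, and therefore $c = \sum_{i=1}^p 1/(\lambda_i - k) > 0$, contradicting $c \leq 0$. Consequently $\min_i \lambda_i \leq k$, which is the required bound.

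Next I would handle the case $c > 0$. If $\min_i \lambda_i \leq k$, then, since $p/c > 0$, we have $\min_i \lambda_i \leq k \leq k + p/c$ and the bound holds trivially. Otherwise, set $m := \min_i \lambda_i > k$; then for every $i$ one has $\lambda_i - k \geq m - k > 0$, so $1/(\lambda_i - k) \leq 1/(m - k)$. Summing over $i$ yields
\begin{equation*}
  c \;=\; \sum_{i=1}^p \frac{1}{\lambda_i - k} \;\leq\; \frac{p}{m-k}.
\end{equation*}
Since $c > 0$ and $m - k > 0$, this rearranges to $m - k \leq p/c$, i.e., $m \leq k + p/c$, as desired.

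There is no real obstacle here: the only point worth noting is that the sum in \eqref{eq:Dio} requires $\lambda_i \neq k$ for all $i$ (which is implicit since otherwise the terms are undefined, and is consistent with the setting $\Lambda \subseteq E_k \setminus \{k\}$ from property $\mathscr{P}$). The argument is purely an inequality chase and does not use the arithmetic structure of $E_k$; that structure will only come into play when combining this bound with the explicit form of the Morales--Ramis table to obtain a finite enumeration of admissible tuples.
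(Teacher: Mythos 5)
Your proof is correct and follows essentially the same route as the paper's: both cases are handled by the same elementary sign/monotonicity argument, with the $c>0$ case being the contrapositive of the paper's "assume all $\lambda_i > p/c + k$" step. The only difference is cosmetic (you split off the trivial subcase $\min_i \lambda_i \leq k$ explicitly), so nothing further is needed.
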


\begin{proof}
  In the case $c\leq 0$, at least one term in the sum~\eqref{eq:Dio}
  should be negative, and thus $\min \lambda_i \leq k$. Let us now
  look at the case $c>0$. Assume that we have $\lambda_i>p/c+k$ for
  all $1\leq i \leq p$. Then $(\lambda_i-k)^{-1}<c/p$ and thus
$\sum\limits_{i=1}^p \frac{1}{\lambda_i-k}<c$,
which is a contradiction with~\eqref{eq:Dio}.
This proves the proposition.
\end{proof}

Let us now remark that all entries of the Morales-Ramis table are
bounded below by $\min(0,k)$ (and this minimum is reached for $j=0$ or
$1$). Starting from this observation, we design a recursive algorithm
that finds all the solutions of equation \eqref{eq:Dio} that belong to
the Morales-Ramis table.

\smallskip

\noindent\underline{\sf MoralesRamisDiophantineSolve}\\
\textsf{Input:} The parameters $p,k,c$ of the equation~\eqref{eq:Dio}.\\
\textsf{Output:} The set of all solutions $(\lambda_1,\dots,\lambda_p)$ in $E_k$ of~\eqref{eq:Dio}, up to permutation. \\
\begin{my_enumerate}
\item If $p=1$ and $c=0$, then return $\emptyset$. If $p=1$ and $c\neq
  0$, then return $1/c+k$ if it belongs to $E_k$, else $\emptyset$. If
  $p>1$, beginning by $j=0,-1$, compute the elements of $E_k$, up to
  the bound of Proposition~\ref{prop:bound}. This yields a set $S$.
\item For each entry $S_i$ of $S$, recursively run the algorithm on
  the input $p-1,k,c-1/(S_i-k)$, with output $R_i$.
\item Return the set of solutions $[R_i,S_i],\; i=1\dots \sharp S$.
\end{my_enumerate}

Due to Proposition \ref{prop:bound}, the equation~\eqref{eq:Dio} has finitely
many solutions $(\lambda_1,\dots,\lambda_p)$ in $E_k$ (this was already proved in \cite[Lemma B.1]{MaPr05}),
and algorithm {\sf MoralesRamisDiophantineSolve} always terminates. In practice, this algorithm is
very costly. The case $k=0,c=1$ with the constraint $\lambda_i\in
\mathbb{N}$ leads to the equation analysed in \cite{Kellogg21}, for
which an optimal bound on $\max(\lambda_1,\ldots,\lambda_p)$ is found. This bound is doubly exponential
in~$p$ (which in our problem is the number of Darboux
points). It is natural to conjecture that a similar doubly exponential
bound holds in our case.

\section{The algorithm}\label{secThe}

\subsection{Specifications}

Let $\a=(\a_1, \ldots, \a_n)$ be parameters and $V\in
\Q(i)(\mathbf{a})(\q_1, \q_2)$ be a parametrized homogeneous
potential. {\em In the sequel, we assume that $V$ is given in
  canonical form, i.e. the coefficients of its numerator and
  denominator lie in $\Q[\a]$.}

Our goal is to compute a subset $\mathscr{I}(V)$ in the set of parameter values~$a$ such that $a\in \mathscr{I}(V)$ is a necessary condition for the integrability of $V(a, \q_1, \q_2)$.

In Section~\ref{sec:polar}, we have defined the polar
representation of a homogeneous potential with coefficients in
$\C$. We can do the same in the context of {\em parametrized}
homogeneous potential by defining the function 
$F(\a, \z)=V\left (\a, \frac{\z+\z^{-1}}{2}, \frac{\z-\z^{-1}}{2i}\right )$
as in the proof of Proposition \ref{prop:polarV}. With this
definition, the following lemma is an immediate consequence of Proposition \ref{prop:polarV}.

\begin{lem}\label{lemma:D}
  Let $\mathscr{D}$ be the complementary of the common solutions of
  the coefficients of the denominator of $V$. For all $a\in
  \mathscr{D}$, $(F(a, \z), k)$ is the polar representation of $V(a,
  \q_1, \q_2)$.
\end{lem}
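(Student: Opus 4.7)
The plan is to reduce the claim to Proposition~\ref{prop:polarV} applied fiberwise, observing that the construction of the polar representative commutes with parameter specialization.

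First, I would verify that for every $a\in\mathscr{D}$, the specialization $V(a,\q_1,\q_2)$ is a well-defined element of $\C(\q_1,\q_2)$: by definition of $\mathscr{D}$, not all coefficients (in $\q_1,\q_2$) of the denominator of $V$ vanish at $a$, so the specialized denominator is a nonzero homogeneous polynomial in $\C[\q_1,\q_2]$. Since specialization of $\a$ preserves the bi-degree in $(\q_1,\q_2)$, the specialized potential remains homogeneous of the same degree $k$.

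Next, I would invoke Proposition~\ref{prop:polarV} on $V(a,\q_1,\q_2)$. That proposition produces as polar representative the univariate rational function $G(\z)=V(a,(\z+\z^{-1})/2,(\z-\z^{-1})/(2i))$ and asserts the identity $V(a,q_1,q_2)=r^{k}G(e^{i\theta})$ whenever $(q_1,q_2)=\varphi(r,\theta)$. The identification $G(\z)=F(a,\z)$ is then immediate from the definitions, since both functions are obtained by performing the same rational substitution $\q_1\mapsto(\z+\z^{-1})/2$, $\q_2\mapsto(\z-\z^{-1})/(2i)$ into the same rational function $V(a,\q_1,\q_2)$, and rational substitution commutes with parameter specialization as long as no denominators vanish identically.

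The only (mild) obstacle I expect is verifying that $F(a,\z)$ is actually well defined as an element of $\C(\z)$, i.e., that its denominator is a nonzero Laurent polynomial in $\z$. Writing the specialized denominator of $V(a,\q_1,\q_2)$ as a nonzero homogeneous polynomial $D(\q_1,\q_2)$ of degree $d$, one sees that its pullback equals $\z^{-d}\tilde{D}(\z)$ for some polynomial $\tilde{D}\in\C[\z]$, and $\tilde{D}\not\equiv 0$, because otherwise $D(\cos\theta,\sin\theta)$ would vanish identically in $\theta$, forcing $D\equiv 0$, a contradiction. With this verification, the conclusion follows directly from Proposition~\ref{prop:polarV}, and $(F(a,\z),k)$ is the polar representation of $V(a,\q_1,\q_2)$.
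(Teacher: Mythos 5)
Your proof is correct and takes essentially the same approach as the paper, which gives no separate argument and simply declares the lemma ``an immediate consequence of Proposition~\ref{prop:polarV}'' --- i.e.\ exactly the fiberwise specialization you carry out. The additional checks you supply (that the specialized denominator is a nonzero homogeneous polynomial for $a\in\mathscr{D}$, and that its pullback under $\q_1\mapsto(\z+\z^{-1})/2$, $\q_2\mapsto(\z-\z^{-1})/(2i)$ cannot vanish identically) are precisely the details the paper leaves implicit.
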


This allows us to define the following set. We let
$\mathscr{I}(V)=\mathscr{I}(F, k)$ be the set of values
$a$ such that $a \in \mathscr{D}$ and $F(a,
\z)$ has property $\mathscr{P}$ (Definition \ref{def:Pi}).

\begin{cor}
  Let $V\in\mathbb{Q}(\mathbf{a})(\q_1,\q_2)$ be a parametrized
  homogeneous potential, and let $(F,k)$ be its polar representation,
  $F\in\mathbb{Q}(i)(\mathbf{a})(\z)$. If $V(a,\q_1, \q_2)$ is
  integrable, then ${a}\in {\mathscr{I}(F,k)}$.
\end{cor}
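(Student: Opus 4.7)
The plan is to unfold the definition of $\mathscr{I}(F,k)$ and reduce the corollary to Theorem~\ref{thm:morales_polar} combined with the classification of integrable potentials admitting a multiple Darboux point (\cite[Theorem 1]{Combot13}, as recalled in the discussion following Theorem~\ref{Macie}). Integrability of $V(a,\q_1,\q_2)$ implicitly requires this specialization to be a well-defined rational potential, so $a\in\mathscr{D}$; Lemma~\ref{lemma:D} then guarantees that $(F(a,\z),k)$ is the polar representation of $V(a,\q_1,\q_2)$. It thus suffices to show that $F(a,\z)$ has property~$\mathscr{P}$ of Definition~\ref{def:Pi}.

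I would split on whether $F(a,\z)$ is constant. If $F(a,\z)\equiv 0$, then the set $\Lambda(F(a,\z),k)$ is empty and vacuously satisfies $\Lambda\subseteq E_k\setminus\{k\}$, so condition~(1) of $\mathscr{P}$ holds. If $F(a,\z)$ is a nonzero constant, both asymptotic exponents vanish, $k_0=k_\infty=0$, and condition~(2) holds directly. These are the degenerate specializations already highlighted in the remark after Definition~\ref{def:Pi}.

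In the remaining case, $F(a,\z)$ is non-constant. Applying Theorem~\ref{thm:morales_polar} to the integrable potential $V(a,\q_1,\q_2)$ yields $\Lambda(F(a,\z),k)\subseteq E_k$. To strengthen this to $\Lambda\subseteq E_k\setminus\{k\}$, I would argue by contradiction: if $\lambda=k$ lay in $\Lambda$, it would come from some $z_0\neq 0$ with $F(z_0)\neq 0$ and $z_0^2F''(z_0)/F(z_0)=0$, i.e., $F'(z_0)=F''(z_0)=0$, so that $z_0$ is a multiple Darboux point of $F(a,\z)$ in the sense of Definition~\ref{darbF}. But \cite[Theorem 1]{Combot13} forces any integrable planar homogeneous potential with a multiple Darboux point to be rotation-invariant, which in polar representation means $F(a,\z)$ constant, contradicting the case hypothesis. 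Hence every Darboux point of $F(a,\z)$ is simple and condition~(1) of $\mathscr{P}$ is verified.

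The main obstacle is mild and largely bookkeeping: one must isolate the specializations where $F(a,\z)$ degenerates to zero or to a nonzero constant from those where it is genuinely non-trivial, and then invoke the external classification result \cite[Theorem 1]{Combot13} in the non-constant case to rule out multiple Darboux points. Everything else follows immediately from Lemma~\ref{lemma:D}, Theorem~\ref{thm:morales_polar}, and the observation that the value $\lambda=k$ in $\Lambda$ is precisely what signals a multiple Darboux point.
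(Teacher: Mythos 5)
Your proposal is correct and follows essentially the same route as the paper: both arguments rest on Lemma~\ref{lemma:D}, Theorem~\ref{thm:morales_polar} for the inclusion $\Lambda\subseteq E_k$, and \cite[Theorem 1]{Combot13} to dispose of multiple Darboux points. The only difference is cosmetic --- you case-split on whether $F(a,\z)$ is constant and detect multiple Darboux points via the eigenvalue $\lambda=k$, while the paper splits directly on the existence of a multiple Darboux point --- so the mathematical content is identical.
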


\begin{proof}
  Assume that $V({a},\q_1, \q_2)$ is integrable. Then thanks to
  \cite[Theorem 1]{Combot13}, if $F({a},\z)$ has a multiple
  Darboux point, then $F({a},\z)$ is constant and thus
  ${a}\in {\mathscr{I}(F,k)}$. If $F({a},\z)$ has only simple
  Darboux points, Theorem \ref{thm:morales_polar}
  implies that eigenvalues at Darboux points are
  all in $E_k$. Thus condition (1) of $\mathscr{P}$ is
  satisfied, and thus ${a}\in {\mathscr{I}(F,k)}$.
\end{proof}

Our main algorithm \textsf{IntegrabilityConditions} in
Section~\ref{ssec:generalinput} will take as input a parametrized homogeneous
potential $V\in \Q(i)(\a)(\q_1, \q_2)$ and will compute polynomial constraints
in $\Q[\a]$ that define the Zariski closure of $\mathscr{I}(V)$.

It uses a subroutine that takes as input special parametrized polar
representations $(G,k)$ and computes polynomial constraints that
define the intersection of $\mathscr{I}(G,k)$ with the subset of the
parameter space over which the valuation and number of roots/poles of
$G$ counted with multiplicities is constant.

\subsection{Subroutine for model functions} 
\vspace{-1em}

\begin{definition}\label{def:model}
We say that $G\in\mathbb{Q}(\mathbf{w})(\z)$ is a \emph{model function} if either $G=0$ identically, or there exist $\alpha\in\mathbb{Z},\;\beta_i\in\mathbb{N}$ with finitely many non zero $\beta_i$'s, such that
\begin{equation}\label{eq2}
G= \mathbf{w}_{0,0} \z^\alpha\!\!\prod\limits_{\underset{\beta_i> 0}{i\in\mathbb{Z}^*}} B_i^i
\vspace{-1em}
\end{equation}
with $B_i=\z^{\beta_i}+\sum_{j=0}^{\beta_i-1} \mathbf{w}_{i,j} \z^j$. We then denote this function by $G=G_{\alpha,\beta}$; we will write $N_{\alpha,\beta}$ for the number of parameters of $G_{\alpha,\beta}$.
\end{definition}

In the following, when there will be no ambiguity on $\alpha, \beta$ (which will be mostly the case), they will be omitted in the subscripts.

\begin{definition}\label{def:sets}
Assume $G_{\alpha, \beta}\neq 0$. We let $\Omega_{\alpha, \beta}\subset \mathbb{C}^{N_{\alpha,\beta}}$ be the subset of the parameter space defined by $\Pi(G_{\alpha,\beta})\neq 0$ where
$$\Pi(G_{\alpha,\beta})= \mathbf{w}_{0,0} \prod\limits_{\beta_i>0} \mathbf{w}_{i,0} \prod\limits_{\beta_i>0} \res(B_i,B_i') \prod\limits_{\beta_i>0,\beta_j>0,j\neq i} \res(B_i,B_j)$$
where $\res (A_1,A_2)$ denotes the resultant of two polynomials $A_1,A_2\in\mathbb{Q}[\mathbf{w}][\z]$ with respect to $\z$.
\end{definition}

Remark that for a $w\in\Omega_{\alpha, \beta}$, the roots of the $B_i$'s are all simple and non zero. Moreover, the $B_i$'s do not have any common root.

\begin{definition}\label{def:Z}
Assume $G_{\alpha, \beta}\neq 0$. Let $S$ be a finite subset of $\mathbb{Q}$. We define the polynomials in $\mathbb{Q}[\mathbf{w},\z]$
\begin{equation}\label{eq:Z2S}
Z_1=\numer\left (\frac{G'_{\alpha,\beta}}{G_{\alpha,\beta}}\right ),\; Z_{2,S}=\prod\limits_{\lambda\in S} \numer\left( k-\frac{\z^2G''_{\alpha,\beta}}{G_{\alpha,\beta}}-\lambda \right),
\end{equation}
where $\numer(f)$ denotes the numerator of $f$. 
\end{definition}

The rest of this section is devoted to the design of an algorithm called {\sf IntegrabilityConditionsModelFamily}, that takes as input a model family $G_{\alpha,\beta}$ and an integer $k$ and returns a set of polynomial equations and inequalities in $\mathbb{Q}[\mathbf{w}]$ that define the intersection of $\Omega_{\alpha, \beta}$ and $\mathscr{I}(G_{\alpha, \beta}, k)$.

We are now ready to describe our algorithm.

\smallskip
\noindent\underline{\sf IntegrabilityConditionsModelFamily}\\
\textsf{Input}: A model family $G_{\alpha,\beta}$ and an integer $k\neq -2,0,2$.\\
\textsf{Output}: {A pair $(L, \Pi(G_{\alpha, \beta}))$ such that $L$
is a list of lists of polynomials $L_1, \ldots, L_\ell$ and $\Omega_{\alpha,
  \beta}\cap \mathscr{I}(G_{\alpha, \beta},k)$ is defined by the 
union of the zero-sets of the polynomials in $L_i$ and at which
$\Pi(G_{\alpha, \beta})\neq 0$ for $1\leq i \leq \ell$.}
\begin{my_enumerate}
\item\label{model:step:0:bis} If $G_{\alpha, \beta}=0$ then return $(\emptyset, \emptyset)$. 
\item\label{model:step:ineq} Compute the polynomial $\Pi(G_{\alpha, \beta})$.
\item\label{model:step:2} If $\alpha(\alpha+\sum_{i\in \mathbb{Z}} i \beta_i)= 0$ then return $(\emptyset, \Pi(G_{\alpha, \beta}))$.
\item\label{model:step:special} Else
   \begin{my_enumerate}
\item\label{model:step:1}
Compute the coefficients $c,p$ of the relation $\sum\limits_{j=1}^p \frac{1}{\lambda_j-k}=c$ with $p=\deg_{\z} Z_1(\mathbf{w},\z)$ and $c= 1/\alpha-1/(\alpha+\sum_{i\in \mathbb{Z}} i \beta_i)$.
\item\label{model:step:3} Solve this equation using algorithm {\sf DiophantineSolve}; let $\mathscr{S}$ be its output.
\item\label{model:step:4} For each solution $S$ in $\mathscr{S}$, build the polynomial $Z_{2,S}$.
\item Compute the remainder $R_S$ for the Euclidean division of
  $Z_{2,S}$ by $Z_1$ in $\mathbb{Q}[\mathbf{w}][\z]$ and let
  $\mathscr{L}_S$ be the sequence of polynomials $ R_{i,S}$ for $i\geq 0$.
\item\label{model:step:5} Let $\mathscr{L}$ be the concatenation of
  all $\mathscr{L}_S$ for $S\in \mathscr{S}$; return $(\mathscr{L},
  \Pi(G_{\alpha,\beta}))$.
   \end{my_enumerate}
\end{my_enumerate}

Before proving the correctness of our algorithm, we will first prove the following two lemmas.

\begin{lem}\label{lem:Z1}
Assume that $G_{\alpha,\beta}\neq 0$, $\alpha\neq 0$ and let $w\in\Omega_{\alpha,\beta}$. The set of Darboux points of $G_{\alpha,\beta}(w,\z)$ is equal to the set of roots of $Z_1(w,\z)$. Moreover, if $\zeta$ is a simple Darboux point of $G_{\alpha,\beta}(w,\z)$, then $\zeta$ is a simple root of $Z_1(w,\z)$.
\end{lem}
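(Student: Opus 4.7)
The plan is to work with the logarithmic derivative of $G_{\alpha,\beta}$. A direct computation gives
\[ \frac{G'_{\alpha,\beta}}{G_{\alpha,\beta}} \;=\; \frac{\alpha}{\z} + \sum_{i:\beta_i > 0} \frac{i\, B_i'}{B_i} \;=\; \frac{N}{D}, \]
with $D = \z \prod_{i:\beta_i>0} B_i$ and $N = \alpha \prod_{i:\beta_i>0} B_i + \z \sum_{i:\beta_i>0} i\, B_i'\prod_{j\neq i} B_j$. I would show that, after specialization at any $w \in \Omega_{\alpha,\beta}$, the fraction $N/D$ is already in lowest terms, so that $Z_1(w,\z)$ coincides with $N(w,\z)$ up to a nonzero constant. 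Both assertions of the lemma then follow by reading off the zeros of $N$.

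The main step, and the only non-routine one, is to verify the coprimality of $N(w,\cdot)$ and $D(w,\cdot)$. The roots of $D$ are $\z=0$ and the roots of each $B_i$, and I would check that $N(w,\cdot)$ does not vanish at any of them. At $\z=0$, $N(w,0) = \alpha \prod_{i:\beta_i>0} \mathbf{w}_{i,0}$, which is nonzero because $\alpha \neq 0$ and $\Pi(G_{\alpha,\beta})(w)\neq 0$ forces every $\mathbf{w}_{i,0}$ to be nonzero. At a root $\zeta$ of some $B_i$, all terms of $N$ containing a factor $B_i$ vanish, leaving $\zeta \cdot i \cdot B_i'(\zeta) \prod_{j\neq i} B_j(\zeta)$; this is nonzero because $\zeta \neq 0$ (the constant term of $B_i$ is nonzero), $B_i$ has only simple roots ($\res(B_i,B_i')\neq 0$), and distinct $B_i,B_j$ share no roots ($\res(B_i,B_j)\neq 0$). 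This is precisely why $\Omega_{\alpha,\beta}$ is defined through these four kinds of factors: each of them rules out one way in which $N$ and $D$ could share a root.

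Granted the coprimality, the two inclusions follow immediately. For any Darboux point $\zeta$ of $G_{\alpha,\beta}(w,\cdot)$, the conditions $\zeta\neq 0$, $G(w,\zeta)\neq 0$, and $G'(w,\zeta)=0$ prevent $\zeta$ from coinciding with $\z=0$ or with a root of any $B_i$ (at such a point $G$ would have a zero or a pole), so $D(w,\zeta)\neq 0$, and $(G'/G)(w,\zeta)=0$ forces $N(w,\zeta)=0$, hence $Z_1(w,\zeta)=0$. Conversely, any root $\zeta$ of $Z_1(w,\cdot)$ is a root of $N(w,\cdot)$ but, by coprimality, not of $D(w,\cdot)$, so $\zeta\neq 0$ and $G_{\alpha,\beta}(w,\zeta)$ is finite and nonzero; then $(G'/G)(w,\zeta)=0$ yields $G'(w,\zeta)=0$, so $\zeta$ is a Darboux point. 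For the last assertion, at a simple Darboux point $\zeta$ the function $G'$ has a simple zero while $G$ does not vanish, so $N/D$ has a simple zero at $\zeta$; since $D(w,\zeta)\neq 0$, this simple zero is carried by $N$ and hence by $Z_1$.
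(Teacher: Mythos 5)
Your proposal is correct and follows essentially the same route as the paper: both work from the partial-fraction form of the logarithmic derivative $G'/G=\alpha/\z+\sum_i iB_i'/B_i$, identify $Z_1$ with its numerator $N$, and use the three kinds of nonvanishing conditions packed into $\Pi(G_{\alpha,\beta})$ (nonzero constant terms, $\res(B_i,B_i')\neq 0$, $\res(B_i,B_j)\neq 0$) to check that $N(w,\cdot)$ does not vanish at $\z=0$ or at any root of a $B_i$, after which both directions and the multiplicity claim follow from the identity $G'=G\cdot Z_1/(\z\prod B_i)$. Your explicit framing via coprimality of $N(w,\cdot)$ and $D(w,\cdot)$ is a clean packaging of the paper's case analysis, and your evaluation $N(w,0)=\alpha\prod_i \mathbf{w}_{i,0}$ even corrects a small omission of the factor $\alpha$ in the paper's displayed formula.
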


\begin{proof}
  Let us first prove that any root of $Z_1(w,\z)$ is a
  Darboux point of $G$. Let $\zeta$ be a root of $Z_1(w,\z)$. We need
  to prove the following
$$\zeta\neq 0,\; G(w,\zeta)\neq 0,\; G'(w,\zeta)=0.$$
Consider the logarithmic derivative of $G$
$$\frac{G'}{G}(\mathbf{w},\z)= \frac{\alpha}{\z} +\sum\limits_{\beta_i>0} i\frac{B_i'}{B_i}.$$
Taking the numerator of this expression, we obtain 
\begin{equation}\label{eq:Z1}
Z_1(\mathbf{w},\z)=\alpha\prod\limits_{\beta_i>0} B_i+\z \sum \limits_{\beta_j>0} jB_j' \prod\limits_{\beta_i>0,i\neq j} B_i.
\end{equation}
Evaluating this expression at $(\mathbf{w},\z)=(w,0)$ gives
$$Z_1(w,0)=\prod\limits_{\beta_i>0} w_{i,0}.$$
This quantity is non zero as $w\in \Omega$ (due to Definition
\ref{def:sets}). Thus $\zeta\neq 0$.  Let us now prove that any non zero
root and pole of $G(w,\z)$ is not a root of $Z_1(w,\z)$. Let
$\eta\neq 0$ be a root or a pole of $G$. Then $\eta$ cancels
one and only one of the factors $B_i$ (let's say $B_{i_0}$) of $G$
(because for $w\in \Omega$, the $B_i$'s have no common root due to
Definition \ref{def:sets}). Now evaluating the expression
\eqref{eq:Z1} at $\eta$, we obtain
$$Z_1(w,\eta)=\eta i_0B_{i_0}'(w,\zeta) \prod\limits_{\beta_i>0,i\neq i_0} B_i(w,\eta).$$
This quantity is non zero because $B_{i_0}'(w,\eta)\neq 0$ (in
Definition \ref{def:sets}, the $B_i$'s have only simple roots). Thus
$\eta$ is not a root of $Z_1(w,\z)$, and therefore $\zeta$ is not a
root nor a pole of $G(w,\z)$. To conclude, we have
\begin{equation}\label{eq:Z12}
G'(\mathbf{w},\z)= G(\mathbf{w},\z) \frac{Z_1(\mathbf{w},\z)}{\z \prod\limits_{\beta_i>0} B_i(\mathbf{w},\z)}.
\end{equation}
The function $G$ is well defined at $(\mathbf{w},\z)=(w,\zeta)$
(i.e. has a finite value), the $B_i$'s do not vanish at
$(\mathbf{w},\z)=(w,\zeta)$, and thus $G'(w,\zeta)=0$.

Let us now prove the reverse. If $\zeta$ is a Darboux
point of $G(w,\z)$, then $G(w,\zeta)$ is finite and non zero, and
$G'(w,\zeta)=0$. Using equality \eqref{eq:Z12} at
$(\mathbf{w},\z)=(w,\zeta)$, we obtain $Z_1(w,\zeta)=0$.

Finally, let us look at multiplicity. If $\zeta$ is a simple
Darboux point, then $G''(w,\zeta)\neq 0$. So we
differentiate relation \eqref{eq:Z12} in $\z$ and evaluate it at
$(\mathbf{w},\z)=(w,\zeta)$. On the right-hand side, all terms vanish
except {$G(w,\zeta) \frac{Z_1'(w,\zeta)}{\zeta
    \prod\limits_{\beta_i>0} B_i(w,\zeta)}$}. As $\zeta$ is a
 Darboux point, it is neither a pole of~$G$, nor a root of any
$B_i$, and thus $Z_1'(w,\zeta)\neq 0$. Therefore, $\zeta$ is a simple
root of $Z_1(w,\z)$.
\end{proof}

\begin{lem}\label{lem:inclusion}
  Assume that $G_{\alpha,\beta}\neq 0$, $\alpha\neq 0$ and
  $\alpha+\sum_{i\in \mathbb{Z}} i \beta_i \neq 0$. Let $S$ be a
  finite set with $k \notin S$ and let
  $w\in\Omega_{\alpha,\beta}$. The polynomial $Z_1(w,\z)$ divides
  $Z_{2,S}(w,\z)$ if and only if $\Lambda(G_{\alpha, \beta}, k)
  \subset S$ and all Darboux points of
  $G_{\alpha,\beta}(w,\z)$ are simple.
\end{lem}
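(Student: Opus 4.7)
The plan is to reduce the divisibility question to a pointwise analysis at each Darboux point, using Lemma~\ref{lem:Z1} to control the roots of $Z_1$ and a simple observation about the rational function $k - \z^2 G''/G - \lambda$ to control where the factors of $Z_{2,S}$ vanish.

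The key preliminary observation is that at any Darboux point $\zeta$ of $G_{\alpha,\beta}(w,\z)$, the value $G_{\alpha,\beta}(w,\zeta)$ is finite and nonzero by definition, so the rational function $k - \z^2 G_{\alpha,\beta}''/G_{\alpha,\beta} - \lambda$ has non-vanishing denominator at~$\zeta$. Consequently its numerator vanishes at $\zeta$ if and only if the expression itself does, i.e.\ if and only if $\lambda$ equals the associated eigenvalue $\mu_\zeta := k - \zeta^2 G_{\alpha,\beta}''(w,\zeta)/G_{\alpha,\beta}(w,\zeta)$. Thus $\zeta$ is a root of $Z_{2,S}(w,\z)$ if and only if $\mu_\zeta \in S$.

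For the ($\Leftarrow$) direction, I assume $\Lambda(G_{\alpha,\beta},k)\subseteq S$ and that every Darboux point is simple. By Lemma~\ref{lem:Z1}, $Z_1(w,\z)$ then has only simple roots, and these roots are exactly the Darboux points. By the observation above, together with $\mu_\zeta \in \Lambda\subseteq S$, each such root $\zeta$ is also a root of $Z_{2,S}(w,\z)$. Since $Z_1(w,\z)$ is squarefree, every simple root of $Z_1$ being a root of $Z_{2,S}$ yields $Z_1\mid Z_{2,S}$.

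For the ($\Rightarrow$) direction, I assume $Z_1(w,\z)\mid Z_{2,S}(w,\z)$. Let $\zeta$ be any Darboux point; by Lemma~\ref{lem:Z1} it is a root of $Z_1$, hence of $Z_{2,S}$, which by the observation forces $\mu_\zeta\in S$, giving $\Lambda(G_{\alpha,\beta},k)\subseteq S$. For simplicity, suppose for contradiction that some Darboux point $\zeta_0$ is multiple, so $G_{\alpha,\beta}'(w,\zeta_0)=G_{\alpha,\beta}''(w,\zeta_0)=0$. Then $\mu_{\zeta_0} = k$, and since $k\notin S$ by hypothesis, no factor of $Z_{2,S}$ vanishes at $\zeta_0$. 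But $\zeta_0$ is a root of $Z_1$, contradicting $Z_1\mid Z_{2,S}$.

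The only delicate point is to check the multiplicity bookkeeping when writing $Z_1$ in terms of $\numer(G'/G)$: I need to know that a multiple Darboux point really does show up as a root of $Z_1$ (with positive multiplicity), which follows from the factorization~\eqref{eq:Z12} in the proof of Lemma~\ref{lem:Z1}, since $G$ and $\z\prod B_i$ are both finite and nonzero at any Darboux point. No deeper obstacle arises; the hypotheses $\alpha \neq 0$ and $w\in\Omega_{\alpha,\beta}$ are used exactly to invoke Lemma~\ref{lem:Z1}, while $k\notin S$ is what rules out multiple Darboux points in the contradiction above.
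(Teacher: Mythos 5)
Your proposal is correct and follows essentially the same route as the paper: it uses Lemma~\ref{lem:Z1} to identify the roots of $Z_1(w,\z)$ with the Darboux points, observes that a factor of $Z_{2,S}$ vanishes at a Darboux point $\zeta$ exactly when the associated eigenvalue lies in $S$ (because the denominator of $k-\z^2G''/G-\lambda$ is a product of factors of the numerator and denominator of $G$, none of which vanish at a Darboux point), and uses $k\notin S$ to exclude multiple Darboux points. The only cosmetic difference is that you phrase the simplicity claim in the forward direction as a contradiction, whereas the paper deduces $G''(w,\zeta)\neq 0$ directly from $k\notin\Lambda$; the substance is identical.
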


\begin{proof}
  Let us first assume that $Z_1(w,\z)$ divides $Z_{2,S}(w,\z)$. Thus
  all roots of $Z_1$ are roots of $Z_{2,S}$. By Lemma \ref{lem:Z1},
  the set of roots of $Z_1(w,\z)$ is the set of Darboux
  points of $G$. Let $\zeta$ be a Darboux point of
  $G$. As $Z_1(w,\zeta)=0$, we have $Z_{2,S}(w,\zeta)=0$ and thus at
  least one factor of the product defining $Z_{2,S}(\mathbf{w},\z)$
  (Eq.~\ref{eq:Z2S}) is zero. So there exists $\lambda_0\in
  S$ such that
$$\hbox{numer}\left( k-\frac{\z^2G''(\mathbf{w},\z)}{G(\mathbf{w},\z)}-\lambda_0 \right)_{(\mathbf{w},\z)=(w,\zeta)}=0.$$
We have
\begin{equation}\label{eq:eigen}
k-\frac{\z^2G''(\mathbf{w},\z)}{G(\mathbf{w},\z)}-\lambda_0=\frac{\hbox{numer}\left( k-\frac{\z^2G''(\mathbf{w},\z)}{G(\mathbf{w},\z)}-\lambda_0 \right)}{\hbox{denom}\left( k-\frac{\z^2G''(\mathbf{w},\z)}{G(\mathbf{w},\z)}-\lambda_0 \right)}.
\end{equation}
To prove that the left-hand side of this equality equals $0$ at
$(w,\zeta)$, we only need to prove that the denominator of the
right-hand side does not vanish. This denominator is always a product
of a power of $\z$, $\mathbf{w}_{0,0}$ and powers of
$B_i(\mathbf{w},\z)$. If such a product vanishes at $(w,\zeta)$, then
exactly one of the $B_i$'s vanishes (as $\zeta\neq 0$ and the $B_i$'s
do not have common roots), and then either $\zeta$ is a root or a pole
of~$G(w,\z)$. This is impossible, since $\zeta$ is a
Darboux point of $G(w,\z)$. Thus
$k-\zeta^2G''(w,\zeta)/G(w,\zeta)=\lambda_0$. So the eigenvalue
associated to $\zeta$ is $\lambda_0$ and it belongs to $S$. Thus
$\Lambda(G,k) \subset S$. Moreover, as $k\notin S$, we have $k\notin
\Lambda(G,k)$, and by Proposition \ref{prop:spectrum}, this implies
that $G''(w, \zeta)\neq 0$ and all Darboux points of
$G$ are simple.

Conversely, assume that $\Lambda(G,k) \subset S$ and all Darboux
points are simple. Let $\zeta$ be a root of $Z_1(w,\z)$. Then $\zeta$
is a Darboux point, and thus $k-\zeta^2G''(w,\zeta)/G(w,\zeta) \in
\Lambda$. Then there exists $\lambda_0\in S$ such that
$k-\zeta^2G''(w,\zeta)/G(w,\zeta)=\lambda_0$. Thus $\hbox{numer}\left(
  k-\frac{\z^2G''(\mathbf{w},\z)}{G(\mathbf{w},\z)}-\lambda_0 \right)$
evaluated at $(w,\zeta)$ equals $0$, and so $Z_{2,S}(w,\zeta)=0$. So
all roots of $Z_1(w,\z)$ are roots of $Z_{2,S}(w,\z)$. As moreover all
roots of $Z_1(w,\z)$ are simple, $Z_1(w,\z)$ divides $Z_{2,S}(w,\z)$.
\end{proof}

\vspace{-0.4cm}
\begin{theorem}\label{theo:correctness:model}
Algorithm {\sf IntegrabilityConditionsModelFamily} takes as input a model function $G_{\alpha, \beta}$ and returns a set of polynomial constraints that define $\Omega_{\alpha, \beta}\cap \mathscr{I}(G_{\alpha, \beta},k)$.
\end{theorem}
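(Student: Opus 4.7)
The plan is to proceed by case analysis following the algorithm's three branches, matching each branch to property $\mathscr{P}$ of Definition~\ref{def:Pi}. Throughout, I fix $w \in \Omega_{\alpha,\beta}$ and use the model-function shape to read off that $G(w,\cdot)$ has asymptotic exponents $k_0 = \alpha$ at $0$ and $k_\infty = \alpha + \sum_i i\beta_i$ at infinity (this uses that, on $\Omega$, none of $\mathbf{w}_{0,0}$ and $\mathbf{w}_{i,0}$ vanish).

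The two trivial branches are immediate. If $G_{\alpha,\beta} = 0$, then $\Lambda = \emptyset$, condition (1) of $\mathscr{P}$ holds vacuously, and $\mathscr{I}(G,k)$ is the whole parameter space. If $\alpha(\alpha + \sum_i i\beta_i) = 0$, then $k_0 k_\infty = 0$ identically on $\Omega$, so condition (2) of $\mathscr{P}$ holds throughout $\Omega$ and $\Omega \cap \mathscr{I}(G,k) = \Omega$. In both cases the algorithm returns no polynomial equality constraint, matching the claim.

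The interesting branch is $\alpha k_\infty \neq 0$. Here condition (2) fails on all of $\Omega$, so $\mathscr{I}(G,k)$ is characterized by condition (1): all Darboux points of $G(w,\cdot)$ are simple and $\Lambda \subseteq E_k$. By Lemma~\ref{lem:Z1} the Darboux points of $G(w,\cdot)$ are exactly the roots of $Z_1(w,\z)$, and simplicity corresponds to simple roots. A short expansion of $Z_1 = \alpha \prod B_i + \z\sum_j j B'_j \prod_{i\neq j} B_i$ shows that its leading coefficient in $\z$ equals the nonzero integer $k_\infty$; hence $\deg_\z Z_1 = p$ uniformly on $\Omega$, where $p$ is the value used by the algorithm. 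Therefore, whenever all Darboux points are simple, $|\Lambda| = p$, and Theorem~\ref{Macie} forces $\Lambda$ to satisfy the Diophantine equation~\eqref{eq:Dio} with $c = 1/\alpha - 1/k_\infty$. Consequently, condition (1) is equivalent to: \emph{$\Lambda$, viewed as a size-$p$ multiset in $E_k$, coincides (up to ordering) with some $S$ in the output $\mathscr{S}$ of \textsf{MoralesRamisDiophantineSolve}}. Invoking Lemma~\ref{lem:inclusion} for each such $S$ (and noting that $k \notin S$ since $1/(\lambda-k)$ occurs in the Diophantine equation) rewrites this as: there exists $S \in \mathscr{S}$ such that $Z_1(w,\cdot) \mid Z_{2,S}(w,\cdot)$.

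It remains to encode each divisibility as polynomial equations on $\mathbf{w}$. Because the $\z$-leading coefficient of $Z_1$ is the constant $k_\infty \neq 0$, the Euclidean division of $Z_{2,S}$ by $Z_1$ performed in $\mathbb{Q}[\mathbf{w}][\z]$ is well-defined, and its remainder $R_S$ commutes with specialization at every $w$: $Z_1(w,\cdot) \mid Z_{2,S}(w,\cdot)$ if and only if every $\z$-coefficient $R_{i,S}$ of $R_S$ vanishes at $w$, i.e.\ the list $\mathscr{L}_S$ vanishes at $w$. Taking the disjunction over $S \in \mathscr{S}$ yields the claimed description of $\Omega_{\alpha,\beta} \cap \mathscr{I}(G_{\alpha,\beta},k)$. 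The main obstacle I anticipate is the commutation of Euclidean division with specialization, together with the bookkeeping linking the three equivalent characterizations (Darboux simplicity, multiset $\Lambda$ in $\mathscr{S}$, divisibility of $Z_{2,S}$ by $Z_1$); the obstacle dissolves as soon as one observes that $\deg_\z Z_1$ is constant on $\Omega$, while finiteness of the output follows from termination of \textsf{MoralesRamisDiophantineSolve} (Proposition~\ref{prop:bound}).
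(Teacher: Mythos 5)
Your proof is correct and follows essentially the same route as the paper's: the same three-branch case split, the same reliance on Lemmas~\ref{lem:Z1} and~\ref{lem:inclusion} and on Theorem~\ref{Macie}, and the same encoding of divisibility via the vanishing of the Euclidean remainder. Your observation that the $\z$-leading coefficient of $Z_1$ is the nonzero constant $k_\infty$ is just a more direct way of getting the degree constancy (and the specialization of the division) that the paper obtains by comparing asymptotic exponents, so this is a presentational rather than a substantive difference.
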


\begin{proof}
When $G$ is identically $0$, i.e. there is no Darboux point, the set of parameters is $\{\bullet\}=\mathbb{C}^0$ and the equalities $\mathscr{I}=\Omega=\{\bullet\}=\mathbb{C}^0$ hold by convention. An empty list is returned (Step \ref{model:step:0:bis}) since there is no parameter. For the rest of the proof, we may assume that $G$ is not $0$ identically.

Let us denote by $\mathscr{O}$ the set defined by the output of the algorithm {\sf IntegrabilityConditionsModelFamily}. Let us first prove that $\mathscr{O} \subset \Omega_{\alpha, \beta}\cap \mathscr{I}(G_{\alpha, \beta},k)$. First remark that $\mathscr{O}\subset\Omega$ as the output of {\sf IntegrabilityConditionsModelFamily} always contains $\Pi(G_{\alpha,\beta})\neq 0$.

\textbf{First case}: $\alpha(\alpha+\sum_{i\in \mathbb{Z}} i \beta_i)= 0$. The output returned at Step \ref{model:step:2} is simply $\Pi(G)\neq 0$, and thus $\mathscr{O}=\Omega$. Remark that when $w\in \Omega$, $0$ is not a root of the $B_i$'s. So the asymptotic exponent of $G(w,\z)$ at $0$ is $k_0=\alpha$. At infinity, the degrees of the $B_i$'s are exactly $\beta_i$ (as the polynomials are monic). Thus we obtain $k_\infty=\alpha+\sum_{i\in\mathbb{Z}} i \beta_i$, and we have $k_0k_\infty=0$, and the second case of Property $\mathscr{P}$ is satisfied. So $\mathscr{P}$ is satisfied. Thus $\mathscr{O} \subset \Omega_{\alpha, \beta}\cap \mathscr{I}(G_{\alpha, \beta},k)$.

\textbf{Second case}:  $\alpha(\alpha+\sum_{i\in \mathbb{Z}} i \beta_i)\neq 0$. Let $w\in \mathscr{O}$. Thus $w\in \Omega$ (because of the inequality $\Pi(G)\neq 0$ returned at Step \ref{model:step:5}). Moreover, there exists a $S\in \mathscr{S}$ such that $Z_1(w,\z)$ divides $Z_{2,S}(w,\z)$ (as the Step \ref{model:step:5} returns the disjunction $\bigvee_{S\subset \mathscr{S}}\mathscr{L}_S$). The list $S$ belongs to $\mathscr{S}$ thanks to Step \ref{model:step:4} and $\mathscr{S}$ is the output of {\sf DiophantineSolve} with parameters $p=\deg_{\z} Z_1(\mathbf{w},\z),k,c= 1/\alpha-1/(\alpha+\sum_{i\in \mathbb{Z}} i \beta_i)$ (Step \ref{model:step:3}). Thus $k\notin S$. The hypotheses of Lemma \ref{lem:inclusion} are now satisfied, which gives $\Lambda \subset S$ and all  Darboux points of $G$ are simple. As $S \subset E_k$, the condition $\Lambda \subset E_k$ is satisfied and then so is Property $\mathscr{P}$.

\smallskip
Conversely, let us now prove $\Omega_{\alpha, \beta}\cap \mathscr{I}(G_{\alpha, \beta},k) \subset \mathscr{O}$.

\textbf{First case}: $\alpha(\alpha+\sum_{i\in \mathbb{Z}} i \beta_i)= 0$. Then we have for the output $\mathscr{O}=\Omega$ (Step \ref{model:step:2}), and so the inclusion is trivially satisfied.

\textbf{Second case}: $\alpha(\alpha+\sum_{i\in \mathbb{Z}} i \beta_i)\neq 0$. Let $w\in \Omega_{\alpha, \beta}\cap \mathscr{I}(G_{\alpha,\beta},k)$. As $w\in \Omega$, the asymptotic exponents of $G(w,\z)$ are $k_0=\alpha,k_\infty=(\alpha+\sum_{i\in \mathbb{Z}} i\beta_i)$. Thus $k_0k_\infty\neq 0$, and so the second property of $\mathscr{P}$ is not satisfied. So the first one has to be satisfied (as $w\in \mathscr{I}(G_{\alpha, \beta},k)$). So all Darboux points of $G$ are simple, and $k_0k_\infty\neq 0$: the hypotheses of Theorem \ref{Macie} are satisfied, and thus the set $\Lambda$ satisfies the relation \eqref{eqMac}. As moreover $\Lambda \subset E_k$ (due to the first property of $\mathscr{P}$), the set $\Lambda$ should be up to permutation one (let us say $S_0$) of the lists of the output $\mathscr{S}$ of the algorithm {\sf DiophantineSolve} with parameters $c=1/k_0-1/k_\infty, p=\deg_{\z} Z_1(w,\z)$.

We now need to check that those parameters are indeed those we use in Step \ref{model:step:1}. In particular, we have yet to prove that $\deg_{\z} Z_1(w,\z)=\deg_{\z} Z_1(\mathbf{w},\z)$. We use relation \eqref{eq:Z12}. As $w\in\Omega$, the asymptotic exponent of $G(w,\z)$ at infinity is $\alpha+\sum_{i\in \mathbb{Z}} i \beta_i \neq 0$ (and equal to the asymptotic exponent of $G(\mathbf{w},\z)$). So the asymptotic exponent of $G'(w,\z)/G(w,\z)$ at infinity is $-1$. For any $w$, the degree of the $B_i$'s are $\beta_i$. So the asymptotic exponent of $Z_1(w,\z)$ (its degree) is $-1+1+\sum_{i\in\mathbb{Z}^*} \beta_i$, which is equal to the one of $Z_1(\mathbf{w},\z)$.

Thus the set $\mathscr{S}$ is indeed the one we compute in Step
\ref{model:step:3}. So in particular, we have $\Lambda \subset S_0$
(seeing now $S_0$ as a set). Now using Lemma~\ref{lem:inclusion}, we
obtain that $Z_1(w,\z)$ divides $Z_{2,S}(w,\z)$.  Step
\ref{model:step:5} returns a sequence of list of polynomials
$\mathscr{L}_1, \ldots, \mathscr{L}_\ell$ such that, for each $S\in
\mathscr{S}$ and $w$ in the solution set of some $\mathscr{L}_i$,
$Z_1(w,\z)$ divides $Z_{2,S}(w,\z)$. So this condition is
satisfied. The inequality $\Pi(G)\neq 0$ is also satisfied as
$w\in\Omega$. Thus $w$ satisfies conditions returned by {\sf
  IntegrabilityConditionsModelFamily}, and so $w\in \mathscr{O}$.
\end{proof}

\subsection{Main algorithm}\label{ssec:generalinput}

We are now ready to present our main algorithm {\sf
  IntegrabilityConditions}. It uses some basic operations on {\em
  ideals} of polynomial rings. 
 If $I,J$ are ideals,
$I:J^\infty$ denotes the saturated ideal $\{f\mid \exists g\in J, \,
\exists N\in \N,  \, fg^N\in I\}$. {We refer to \cite[Chap 2. Sec. 1]{DeckerLossen}} for more details.

\smallskip\noindent\underline{\sf IntegrabilityConditions}\\
\textsf{Input}: $V\in\mathbb{Q}(i)(\mathbf{a})(\q_1,\q_2)$ $k$-homogeneous,  given in canonical form (i.e. the coefficients of numerator / denominator of $V$ lie in $\Q[\a]$).\\
\textsf{Output}: A sequence of sets of polynomials $H_1, \ldots,
H_\ell$ in $\mathbf{a}$ such that the union of the common complex
solutions of $H_i$ defines the Zariski closure of $\mathscr{I}(V)$.
\begin{my_enumerate}
\item\label{main:step:1} Write $V$ in polar coordinates
  $V(\mathbf{a},q_1,q_2)=r^k F(\mathbf{a},e^{i\theta})$.
\item\label{main:step:1bis} Let $\Delta$ be the sequence of 
  coefficients of the denominator of $V$.
\item\label{main:step:2} Compute all the model functions $G_{\alpha,\beta}$
  whose numerator and denominator have degrees in $z$ less than those of $F$.
\item\label{main:step:3} For all possible $(\alpha, \beta)$, let
  $(\mathscr{L}_{\alpha, \beta}, W_{\alpha, \beta})$ be the output of
  {\sf IntegrabilityConditionsModelFamily} with input $G_{\alpha,
    \beta}$ and $k$.
\item Let $\mathscr{C}_{\alpha,\beta}$ be the list of coefficients in
  $\z$ of the numerator of $F-G_{\alpha,\beta}$.
\item\label{main:step:4} For each family $G_{\alpha,\beta}$ and for
  all lists $\mathscr{L}$ in $\mathscr{L}_{\alpha, \beta}$, compute a set ${H}_{\alpha,
    \beta}(\mathscr{L})$ of generators of the elimination ideal in
  $\mathbf{a}$ of the ideal generated by
$$
\langle \mathscr{L},W_{\alpha, \beta} T-1, \mathscr{C}_{\alpha,\beta} \rangle:\langle\Delta\rangle^{\infty}\cap \Q(i)[\mathbf{a}].
$$

\item\label{main:step:5} Return the list of all sets $ {H}_{\alpha,
    \beta}(\mathscr{L})$.
\end{my_enumerate}

Below, we reuse the notation $\mathscr{D}$ for the non-empty Zariski
open set introduced in Lemma \ref{lemma:D}.

\begin{theorem}
  Let $V\in \mathbb{Q}(i)(\mathbf{a})(\q_1, \q_2)$ be a $k$-homogeneous potential. Algorithm {\sf IntegrabilityConditions}
  takes as input $V$ and returns polynomial conditions defining the
  Zariski closure of $\mathscr{I}(V)$.
\end{theorem}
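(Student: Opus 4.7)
The plan is to show both inclusions between the variety cut out by the output polynomials and the Zariski closure $\overline{\mathscr{I}(V)}$. The structure mirrors the proof of Theorem~\ref{theo:correctness:model} for the subroutine, the new ingredient being the elimination step that projects back onto the parameter space. The key structural observation is that every $a\in\mathscr{D}$ determines a canonical model family: by factoring the numerator and denominator of $F(a,\z)$ into monic distinct irreducible factors (collecting multiplicities into the exponents $\beta_i$, recording the valuation at $0$ as $\alpha$, and reading the leading coefficient as $\mathbf{w}_{0,0}$), one gets a unique tuple $(\alpha,\beta)$ together with a value $w(a)\in\Omega_{\alpha,\beta}$ satisfying $F(a,\z)=G_{\alpha,\beta}(w(a),\z)$. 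Since $F$ has bounded numerator and denominator degrees in $\z$, only finitely many tuples $(\alpha,\beta)$ arise, and these are precisely those enumerated in Step~\ref{main:step:2}. The condition $F(a,\z)=G_{\alpha,\beta}(w,\z)$ is equivalent to the vanishing of all polynomials in $\mathscr{C}_{\alpha,\beta}$ evaluated at $(a,w)$.

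For soundness, pick $a$ in the variety of some $H_{\alpha,\beta}(\mathscr{L})$. By the geometric interpretation of elimination and of saturation by $\langle\Delta\rangle^\infty$, up to Zariski closure there exist $(w,T)$ with $\mathscr{L}(w)=0$, $W_{\alpha,\beta}(w)T=1$, $\mathscr{C}_{\alpha,\beta}(a,w)=0$ and $\Delta(a)\neq 0$. These say respectively: $w$ lies in a branch returned by the subroutine, $w\in\Omega_{\alpha,\beta}$, $F(a,\z)=G_{\alpha,\beta}(w,\z)$, and $a\in\mathscr{D}$. By Theorem~\ref{theo:correctness:model}, $w\in\mathscr{I}(G_{\alpha,\beta},k)$, so $G_{\alpha,\beta}(w,\z)=F(a,\z)$ has property $\mathscr{P}$, hence $a\in\mathscr{I}(V)$. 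Taking Zariski closure and union over all $(\alpha,\beta)$ and all $\mathscr{L}$ gives the inclusion $\bigcup V(H_{\alpha,\beta}(\mathscr{L}))\subseteq\overline{\mathscr{I}(V)}$.

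For completeness, let $a\in\mathscr{I}(V)\subseteq\mathscr{D}$. Let $(\alpha,\beta)$ be the unique model family attached to the canonical decomposition of $F(a,\z)$, and $w(a)\in\Omega_{\alpha,\beta}$ the associated parameter, so that $\mathscr{C}_{\alpha,\beta}(a,w(a))=0$. Since $F(a,\z)=G_{\alpha,\beta}(w(a),\z)$ has property $\mathscr{P}$, we have $w(a)\in\Omega_{\alpha,\beta}\cap\mathscr{I}(G_{\alpha,\beta},k)$, and Theorem~\ref{theo:correctness:model} places $w(a)$ in the variety of at least one list $\mathscr{L}\in\mathscr{L}_{\alpha,\beta}$. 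Setting $T=1/W_{\alpha,\beta}(w(a))$, the triple $(a,w(a),T)$ lies in the variety of $\langle\mathscr{L},W_{\alpha,\beta}T-1,\mathscr{C}_{\alpha,\beta}\rangle$ and avoids $V(\Delta)$; projecting to the parameters $\mathbf{a}$ shows that $a\in V(H_{\alpha,\beta}(\mathscr{L}))$. Hence $\mathscr{I}(V)$, and therefore its Zariski closure, is contained in the union produced by the algorithm.

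The main obstacle is justifying the first paragraph rigorously: one must check that the enumeration of Step~\ref{main:step:2} truly exhausts every structural pattern that can occur when specializing $F$ at points of $\mathscr{D}$, including all degenerations where some factors coalesce or disappear (handled by allowing $\beta_i=0$). The canonical form assumption on $V$ is crucial here, because it prevents hidden cancellations in the denominator from producing a specialization outside every $\Omega_{\alpha,\beta}$. A secondary subtle point is that the saturation by $\langle\Delta\rangle^\infty$ must be performed before eliminating $\mathbf{w}$ and $T$, in order to discard extraneous components entirely contained in $V(\Delta)$, which by definition of $\mathscr{I}(V)\subseteq\mathscr{D}$ lie outside the target set.
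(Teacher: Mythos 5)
Your proposal is correct and follows essentially the same route as the paper's proof: both directions rest on the correctness of {\sf IntegrabilityConditionsModelFamily}, the canonical choice of $(\alpha,\beta)$ and $w(a)\in\Omega_{\alpha,\beta}$ from the factorization of $F(a,\z)$ for the completeness direction, and the elimination/saturation theorem (taking Zariski closures of the projected constructible sets) for soundness. The subtleties you flag at the end (exhaustiveness of Step~\ref{main:step:2} and the role of saturation by $\langle\Delta\rangle^\infty$) are exactly the points the paper handles via Lemma~\ref{lemma:D} and its Lemma~\ref{lemma:transfert}.
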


\begin{proof}
  In the sequel, we denote by
  $\overline{\mathscr{I}(V)}$ the Zariski closure of
  $\mathscr{I}(V)$ and we let $\mathscr{O}$ be the set defined by the
  output of Algorithm {\sf IntegrabilityConditions}. Note that since
  this set is an algebraic set, it is closed. We prove below that
  $\mathscr{O}=\overline{\mathscr{I}(V)}$.

  Take $a\in \mathscr{I}(V)$; we prove below that $a\in
  \mathscr{O}(V)$ from which we deduce that $\mathscr{I}(V)\subset
  \mathscr{O}$. Since $\mathscr{O}$ is closed for the Zariski
  topology, we conclude that $\overline{\mathscr{I}(V)}\subset
  \mathscr{O}$. Recall that, by assumption, $a\in
  \mathscr{I}(V)=\mathscr{I}(F,k)$; then $a\in \mathscr{D}$ and $F(a,
  \z)$ has property $\mathscr{P}$ (see Definition \ref{def:Pi}). We
  let $\alpha$ be the valuation of $F(a, \z)$ and $\beta_i$ be the
  number of roots/poles of multiplicity $i$. Thus, we consider the
  model function $G_{\alpha, \beta}$; for $w\in \Omega_{\alpha,
    \beta}$, $G_{\alpha, \beta}(w, \z)$ has the same features as those
  of $F(a, \z)$. This implies that $\Omega_{\alpha, \beta}$ contains a
  common root to the polynomials in $\mathscr{C}_{\alpha, \beta}$
  (Step \ref{main:step:4}) obtained after instantiating $\mathbf{a}$
  to $a$; let $w$ be one of these roots.

  Recall that $\mathscr{C}_{\alpha,\beta}$ is the list of coefficients of
  the numerator of $F-G_{\alpha, \beta}$ (seen as a polynomial in
  $\C[\z]$). Also, since $a\in \mathscr{D}$, the denominator of $F(a,
  \z)$ is not identically $0$ and we deduce that $F(a, \z)=G_{\alpha,
    \beta}(w, \z)$. Note that $w\in \Omega_{\alpha, \beta}$ by
  construction; also, by assumption, $a\in \mathscr{I}(V)$, and thus
  $F(a, \z)=G_{\alpha, \beta}(w, \z)$ has property $\mathscr{P}$. We
  deduce that $w\in \mathscr{I}(G_{\alpha, \beta}, k)$. Thus,
  correctness of Algorithm {\sf IntegrabilityConditionsModelFamily}
  (Theorem \ref{theo:correctness:model}) implies that $w$ is a
  solution of the systems output at Step \ref{main:step:3} of the main
  algorithm. Now, by construction $(w, a)$ is not a common solution of
  the polynomials in $\Delta$ and is a solution of the system obtained
  by setting to $0$ the polynomials in $\mathscr{C}_{\alpha, \beta}$,
  $\mathscr{L}$ and the inequalities $W_{\alpha, \beta}\neq 0$, for
  some $\mathscr{L}$ in $\mathscr{L}_{\alpha, \beta}$. Thus, the fact
  that $H_{\alpha, \beta}$ vanishes at $a$ is immediate and we
  conclude that $a\in \mathscr{O}$ as requested.

  To finish the proof, we establish that $\mathscr{O}\subset
  \overline{\mathscr{I}(V)}$; take $a' \in \mathscr{O}$. Then, there
  exists a set of polynomial equations $H_{\alpha,\beta}$ at Step
  \ref{main:step:4} that is satisfied by $a'$. Let $G_{\alpha, \beta}$
  be a model function associated to $(\alpha, \beta)$ (note that
  $(\alpha, \beta)$ may be non unique).

  By Theorem \ref{theo:correctness:model}, the call to {\sf
    IntegrabilityConditionsModelFamily} at Step \ref{main:step:3}
  returns a set of polynomial equations and inequalities that define
  $\Omega_{\alpha,\beta}\cap\mathscr{I}(G_{\alpha, \beta},
  k)$. Reusing the notations of Step \ref{model:step:3}, let
  $\mathscr{U}_{\alpha,\beta}$ be the constructible set defined by
  $W_{\alpha,\beta}\neq 0$, the vanishing of all polynomials in
  $\mathscr{L}$ (for some $\mathscr{L}$ in $\mathscr{L}_{\alpha,
    \beta}$) and the non-vanishing of at least one polynomial in
  $\Delta$.

  Let $\mathscr{A}_{\alpha,\beta}$ be the projection of
  $\mathscr{U}_{\alpha, \beta}$ on the $\mathbf{a}$-space; since
  $\mathscr{U}_{\alpha, \beta}$ is a constructible set,
  $\mathscr{A}_{\alpha, \beta}$ is a constructible set \cite[Chap. 3
  Sect. 2]{CLO}. By the elimination theorem \cite[Chap. 3 Theorem
  2]{CLO}, the set of equations $H_{\alpha,\beta}$ defines the Zariski
  closure of $\mathscr{A}_{\alpha,\beta}$.

\vspace{-1em}
  \begin{lem}\label{lemma:transfert}
    Let $(w, a)\in \mathscr{U}_{\alpha, \beta}$. Then $w\in
    \Omega_{\alpha, \beta}\cap \mathscr{I}(G_{\alpha, \beta}, k)$ and
    $a\in \mathscr{I}(F, k)$. 
  \end{lem}

\vspace{-1em}
  \begin{proof}
    By assumption, $w$ is a solution of the output of {\sf
      IntegrabilityConditionsModelFamily} performed with input
    $G_{\alpha, \beta}$ and $k$. The fact that $w\in \Omega_{\alpha,
      \beta}\cap\mathscr{I}_{\alpha, \beta}$ is a direct
    consequence of Theorem \ref{theo:correctness:model} that states
    the correctness of {\sf IntegrabilityConditionsModelFamily}. Since
    $w\in \Omega_{\alpha, \beta}\cap \mathscr{I}(G_{\alpha, \beta},
    k)$, we deduce that $(G_{\alpha, \beta}(w, \z), k)$ has 
    property~$\mathscr{P}$ (see Definition \ref{def:Pi}).

    Now, remark that for all $(w, a)\in \mathscr{U}_{\alpha, \beta}$,
    $F(a, \z)=G_{\alpha, \beta}(w, \z)$ (because the equations in
    $\mathscr{C}_{\alpha, \beta}$ are satisfied; see Step \ref{main:step:4}).
    We deduce that $F(a, \z)$ has the property $\mathscr{P}$; and hence
    $a\in \mathscr{I}(F, k)=\mathscr{I}(V)$.
  \end{proof}

  By Lemma \ref{lemma:transfert}, we conclude that
  $\mathscr{A}_{\alpha, \beta}\subset \mathscr{I}(F, k)$ which, by
  definition \ref{def:Pi}, is $\mathscr{I}(V)$. Then, the Zariski
  closure of $\mathscr{A}_{\alpha, \beta}$ is contained in the Zariski
  closure of $\mathscr{I}(F, k)=\mathscr{I}(V)$.  Since $\mathscr{O}$ is the
  union of the Zariski closures of $\mathscr{A}_{\alpha,\beta}$ for
  all possible $(\alpha, \beta)$, we deduce that $\mathscr{O}$ is
  contained in the Zariski closure of $\mathscr{I}(V)$. This concludes
  the proof.
\end{proof}

\vspace{-0.3cm}
\section{Experimental Results}\label{secRes}
The algorithm \textsf{IntegrabilityConditions} provides a general framework
for computing necessary conditions for integrability of homogeneous planar potentials. By incorporating more specific integrability criteria (order 2 conditions~\cite{MoRaSi07}, diagonalizability of Hessians~\cite{DuMa09}, improper Darboux points~\cite{StPr13}), the algorithm can be enhanced.
This is actually what we implemented, in the computer algebra system~\textsf{Maple (v17)}.\footnote{Our Maple implementation can be
downloaded at the url \hfill \url{http://combot.perso.math.cnrs.fr/software.html}.} The main
tools that we use involve polynomial ideals, notably relying on the Gr\"obner
engine \textsf{FGb}, implemented by J.-C. Faug\`ere~\cite{fgb} for elimination
ideal computation. This step is a crucial one for efficiency. Our
implementation succeeds in dealing with around $10$ Darboux points for simple
enough potentials (typically polynomial potentials), and around $5$ in all
cases. 

Using this implementation, we have been able to provide the first complete proof of the non-integrability of the \emph{collinear three body problem}, and even of a generalization with electrical interactions (see Theorem~\ref{th:pot3body} below). 
This implementation is also able to automatically reprove the results in ~\cite{MaPr05} about \emph{polynomial potentials\/} of degree at most~4, and, more importantly, to explore \emph{polynomial and inverses of polynomial potentials\/} with higher degree (up to 9) leading to the discovery of {\em several new candidates for integrability} that were, to our knowledge, previously unknown.


\paragraph*{Collinear three body problem}
Some classical dynamical problems, such as the collinear three body
problem (and its generalization to any homogeneity degree), can be
written as planar homogeneous potentials, of the form
\eqref{pot3body} below. For them, our algorithm is able to perform
a complete integrability analysis. Previous works on the integrability
of the three body problem treated either its simpler \emph{planar\/}
version~\cite{Boucher00,Tsygvintsev01,MoSi09}, or the collinear
version itself, but under restrictive
assumptions~\cite{Yoshida87,MoRa01,Shibayama11}.  For instance,
non-integrability was proven by Yoshida~\cite{Yoshida87} and Morales
and Ramis~\cite{MoRa01} in the case of equal masses;
Shibayama~\cite[Theorem 3]{Shibayama11} considered the case of
arbitrary masses, but his proof is valid only for the classical
collinear three body problem, and does not take into account some
exceptional cases.  Our algorithm proves the following complete
classification result. 

\begin{theorem}\label{th:pot3body}
The problem of three bodies interacting pairwise by a force proportional to the inverse of the square of the mutual distance, after reduction by translation, is not integrable.
\end{theorem}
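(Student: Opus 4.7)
The strategy is to cast the collinear three body problem as an input to the algorithm \textsf{IntegrabilityConditions} and then to verify that the polynomial constraints it outputs have no solution corresponding to an admissible triple of nonzero masses.

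\emph{Setup.} After reduction by translation (fixing the center of mass $m_1 \q_1 + m_2 \q_2 + m_3 \q_3 = 0$) and choosing the relative coordinates $u = \q_2 - \q_1$, $v = \q_3 - \q_2$, the motion is governed by a Hamiltonian of the form \eqref{eq:H}, where the potential is the planar homogeneous function
\begin{equation*}
V(\mathbf{m}, u, v) = -\frac{m_1 m_2}{u} - \frac{m_2 m_3}{v} - \frac{m_1 m_3}{u+v} \in \Q(m_1, m_2, m_3)(u,v),
\end{equation*}
of homogeneity degree $k=-1$. A joint rescaling of the masses being innocuous, one may reduce to two effective parameters, say $\a_1 = m_2/m_1$, $\a_2 = m_3/m_1$, so that $V$ fits the specification of the main algorithm. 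The kinetic-energy normalization introduced by translation reduction contributes only a linear change of variables on $(\p_1,\p_2)$ that preserves integrability and the framework of Theorem~\ref{thm:morales}.

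\emph{Algorithm application.} First write $V$ in polar coordinates, obtaining a rational function $F(\a,\z) \in \Q(i)(\a_1,\a_2)(\z)$ whose generic degree (in $\z$) and valuation are easily computed from the three factors. Because $k = -1$ is odd, $F$ is odd; as noted after Definition~\ref{def:Pi}, this rules out case~(2) of property~$\mathscr{P}$, so only simple Darboux points with eigenvalues in $E_{-1}$ need be examined, and the Diophantine constraint~\eqref{eqMac} of Theorem~\ref{Macie} applies with $1/k_0 - 1/k_\infty$ an explicit rational determined by $(\alpha,\beta)$. Enumerate all admissible multiplicity patterns $(\alpha,\beta)$ compatible with the degrees of $F$, call \textsf{IntegrabilityConditionsModelFamily} on each of the corresponding model functions, and glue the outputs through Steps \ref{main:step:3}--\ref{main:step:5} of \textsf{IntegrabilityConditions}. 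This produces a finite list of ideals $H_{\alpha,\beta}(\mathscr{L}) \subset \Q(i)[\a_1,\a_2]$ whose common vanishing locus contains $\overline{\mathscr{I}(V)}$.

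\emph{Conclusion and main obstacle.} To conclude non-integrability it remains to check that each output ideal $H_{\alpha,\beta}(\mathscr{L})$ has no zero $(a_1,a_2)$ corresponding to three nonzero masses. The chief obstacle is computational: the elimination steps produce large Gröbner bases (as signalled by the authors' remark that the implementation handles at most around ten Darboux points), and each singular stratum --- coalescence of Darboux points, vanishing of a mass, and the various collisions $u=0$, $v=0$, $u+v=0$ between bodies --- generates its own branch that must be treated individually and shown to be either empty or incompatible with $m_1 m_2 m_3 \neq 0$. Once the ideals are in hand this last step is a finite verification, performed in practice by combining the symbolic elimination of \textsf{FGb} with a real-solution analysis (e.g.\ real root isolation or quantifier elimination) to certify the absence of admissible parameter values, yielding the stated non-integrability.
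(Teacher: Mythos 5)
Your overall strategy coincides with the paper's: write the translation-reduced collinear three body problem as a planar homogeneous potential of degree $k=-1$ that is a sum of three inverse linear forms, feed it to \textsf{IntegrabilityConditions}, and show the resulting necessary conditions are incompatible with three genuinely interacting bodies. Two remarks on the setup. First, normalizing the kinetic energy after reduction by translation requires a linear change of the \emph{position} variables as well, which turns $u$, $v$, $u+v$ into general linear forms $\a\q_1+\b\q_2$, etc.; this is precisely why the paper works with the potential $V=(\a\q_1+\b\q_2)^{-1}+(\c\q_1+\d\q_2)^{-1}+(\e\q_1+\f\q_2)^{-1}$ and then reduces by rotation-dilatation to $F(\z)=\z/(\a\z^2+\b)+\z/(\c\z^2+\d)+\z/(\z^2+1)$. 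You wave at this with ``a linear change of variables on $(\p_1,\p_2)$'', which is not quite right, but the idea is recoverable.

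The genuine gap is in your conclusion. You reduce the theorem to ``check that each output ideal has no zero corresponding to three nonzero masses'' and then declare this a finite verification to be done by elimination plus real-solution analysis. That defers the entire mathematical content of the proof to an unperformed computation, and it also points at the wrong kind of argument: the issue is not the reality or positivity of the masses. The paper actually lists the seven output ideals, e.g.\ $[\b-\a,\d-\c]$, $[\a+1,\b+1]$, \dots, and observes that on each of their zero sets the function $F$ degenerates to $\z/(\balpha\z^2+\bbeta)+\z/(\bgamma\z^2+\bdelta)$, i.e.\ it has only \emph{two} singularities instead of three. Since each pairwise interaction contributes a singularity of $F$ (a collision of two bodies), every admissible parameter value forces at least one pair of bodies not to interact, which is impossible for the three body problem with three nonzero masses. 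Without exhibiting the output of the algorithm and making this structural observation (or some substitute for it), your argument is a proof plan rather than a proof.
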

To prove this result, we first observe that the three body problem can be rewritten as a potential of the form
\begin{equation}\label{pot3body}
V(\q_1,\q_2)={\a\q_1+\b\q_2}^{-1}+{\c\q_1+\d\q_2}^{-1}+{\e\q_1+\f\q_2}^{-1}.
\end{equation}
It is only necessary to study its integrability up to
rotation-dilatation, and thus after reparametrization, we can reduce
the problem to the following function
$F(\z)={\z}/{(\a\z^2+\b)}+{\z}/{(\c\z^2+\d)}+{\z}/{(\z^2+1)}$.
Our algorithm ({after simplification}) produces the following
integrability conditions on the parameters
\begin{align*}
[\b-\a,\d-\c],[\b-\a, \c+\d],[\a+1, \b+1],[\a+\b,\d-\c],\\
[\a+\b,\c+\d],[\a+\c,\d+\b],[\c+1,\d+1].  
\end{align*}
In all the above cases, the function $F$ simplifies to the form
$F(\z)={\z}/{(\balpha \z^2+\bbeta)}+{\z}/{(\bgamma \z^2+\bdelta)
}$. The three body problem potential is a sum of three interactions,
which have a singularity when two bodies collapse. The above functions
$F$ have only two singularities instead of three, and so at least two
bodies do not interact (as this would lead to a singularity). This
finishes the proof of Theorem~\ref{th:pot3body}. This result is here
proved for the first time for the most general form of the potential.


\noindent
\paragraph*{Polynomials and inverses of polynomials}
These potentials are very simple; they have been studied extensively
in \cite{LiMaVa11,LiMaVa11b,MaPr04,Hietarinta83,NaYo01}, and
contain some interesting integrable potentials. For our algorithm,
these potentials are also simpler than typical rational ones, because
they do not involve simplifications between numerators and
denominators. Thus fewer functions $G_{\alpha,\beta}$ need to be
analyzed. Below we reproduce some results of non-integrability of
homogeneous polynomial potentials and inverses of homogeneous
polynomial potentials, and we extend them to higher degrees than previously known thanks to
our algorithm (the positive degree corresponding to a polynomial, and
a negative degree to an inverse of a polynomial). A polynomial homogeneous potential leads to a function $F$ of the form $F(\z)=\sum_{i=0}^k a_i \z^{k-2i}$ (respectively $F(\z)=1/\sum_{i=0}^k a_i \z^{k-2i}$ for the inverse of a homogeneous polynomial). Below we give the sets of eigenvalues for candidates for integrability, and corresponding computation timings.

{\small\begin{center}
\begin{tabular}{|c||c|c|c|c|}\hline
$k$ &Eigenvalues (the set $\Lambda(F,k)$ in Eq.~\eqref{Lambda:def}) & timings\\\hline\hline
3& $\{0,6\},\{1,15\},\{45,\frac{3}{8}\},\{1,10,45\}$ &0.43s\\\hline
4&$\{0,12\},\{24,\frac{3}{2}\},\{12,84,\frac{3}{2}\},\{544,\frac{3}{2},\frac{35}{2}\}$&0.91s\\\hline
5&$\{\frac{27}{8},135\},\{\frac{7}{8},35,170,665\},\{0,20\},\{2,35\}$&433s\\\hline
6&$\{0,30\},\{48,\frac{5}{2}\}$&2.5s\\\hline
7&$\{0,42\},\{63,3\}$&33.5s\\\hline
9& $\{4\},\{0,72\},\{4,99\}$ &106s\\\hline
$-3$&$\{-2\},\{0\},\{3\},\{3,7,12\}$&0.36s\\\hline
$-4$&$\{8,20\},\{0\}$&0.639s\\\hline
$-5$&$ \{130, -\frac{13}{8}\}, \{-3, 15\}, \{22, -\frac{13}{8}\}, \{0\}, \{15\}$&3.167s\\\hline
$-6$&$\{24\},\{0\}$&594s\\\hline
\end{tabular}\\
{\small{Integrability analysis of homogeneous polynomials \\and inverses of homogeneous polynomials}}
\end{center}}
	
For polynomials of degree $3$ and $4$, we retrieve known results (leading
to a complete classification of integrable homogeneous potentials of
degree $3$, and almost complete for degree $4$). For $k=4$, we obtain after
simplification the following ideals
\begin{small}\begin{equation*}\begin{split}
[\a_1,\a_2],[\a_4,\a_5],[36\,\a_5\a_1-\a_3^2,6\a_4\a_1-\a_3\a_2,6\a_2\a_5-\a_4\a_3],[44979\a_2^2-\\
376712\a_3\a_1,66879684\a_5\a_1-75625\a_3^2,16719921\,\a_{{4}}\a_{{2}}-4708900\,{\a_{{3}}}^{2},\\
-376712\,\a_3\a_5+44979\,{\a_{{4}}}^{2},8178\,\a_{{4}}\a_{{1}}-275\,\a_{{3}}\a_{{2}},8178\,\a_{{2}}\a_{{5}}-275\,\a_{{4}}\a_{{3}}] ,\\
[-392\,\a_{{3}}\a_{{1}}+99\,{\a_{{2}}}^{2},484\,\a_{{5}}\a_{{1}}-{\a_{{3}}}^{2},1089\, \a_{{4}}\a_{{2}}-196\,{\a_{{3}}}^{2},\\
-392\,\a_{{3}}\a_{{5}}+99\,{\a_{{4}}}^{2},22\,\a_{{4}}\a_{{1}}-\a_{{3}}\a_{{2}},22\,\a_{{2}}\a_{{5}}-\a_{{4}}\a_{{3}}] ,\\
[-40\,\a_{{3}}\a_{{1}}+7\,{\a_{{2}}}^{2},15876\,\a_{{5}}\a_{{1}}-25\,{\a_{{3}}}^{2},441\,\a_{{4}}\a_{{2}}-100\,{\a_{{3}}}^{2},\\
-40\a_3\a_5+7\a_4^2,126\,\a_{{4}}\a_{{1}}-5\,\a_{{3}}\a_{{2}},126\,\a_{{2}}\a_{{5}}-5\,\a_{{4}}\a_{{3}}]
\end{split}\end{equation*}\end{small}

The first two cases are the exceptional ones with $k_0(F) k_\infty(F)= 0$, and the other cases lead indeed to integrable
potentials. These are exactly the conditions found in \cite{MaPr05}. 
At degree~$5$, {\em two non-trivial new potentials (up to
conjugation and rotation-dilatation) are detected, not known to be
integrable}, but satisfying all integrability conditions. Their
eigenvalue sets are $\{27/8,135\},\{7/8,35,170,665\}$. The second has
algebraic coefficients of degree $12$, and the first is
\begin{equation*}\begin{split}
V(\q_1,\q_2)=(\q_1+i\q_2)^2(11391716i\q_2^3+73950132\q_1\q_2^2-\\
150075213i\q_1^2\q_2-96733564\q_1^3).
\end{split}\end{equation*}
At degree $6$ and $7$, the only possible cases are either already
known, or $k_0(F) k_\infty(F)= 0$, or they do
not have Darboux points. Thus for degree $7$, the only cases whose
integrability status is still unknown are up to rotation-dilatation
$F(\z)=\z$ and $F(\z)=\z^3$. So we obtained a classification of integrable homogeneous polynomial potentials of degree~$7$. 
For degree $8$ and $9$, some optimizations are
necessary for the algorithm to be workable. Indeed, thanks to the fact that our family is invariant by rotation-dilatation, it is only necessary to consider functions $G_{\alpha,\beta}$ with the coefficient $w_{0,0}=1$ and with the trailing coefficient of one polynomial factor equal to $1$. This removes two variables in the elimination ideal, and reduces by $2$ the Hilbert dimension of the output. 
At degree~$9$, we find {\em three new cases satisfying all integrability conditions}; they are given by
$F(\z)=\z+\z^{-5},\;\;F(\z)=\z^3+\z^{-3},\;\;F(\z)=\z^{-5}(\z^2+1)^5$.

\bibliographystyle{abbrv}
\vspace{2em}
\begin{scriptsize}
	\vspace{2em}

\begin{thebibliography}{}

\end{thebibliography}


\begin{thebibliography}{10}

\bibitem{Boucher00}
D.~Boucher.
\newblock Sur la non-int\'egrabilit\'e du probl\`eme plan des trois corps de
  masses \'egales.
\newblock {\em C. R. Acad. Sci. Paris S\'er. I Math.}, 331(5):391--394, 2000.

\bibitem{Combot13}
T.~Combot.
\newblock Meromorphically integrable homogeneous potentials with multiple
  {D}arboux points.
\newblock {P}reprint \url{http://arxiv.org/abs/1301.6621}.

\bibitem{CoKo12}
T.~Combot and C.~Koutschan.
\newblock Third order integrability conditions for homogeneous potentials of
  degree $-1$.
\newblock {\em J. Math. Phys.}, 53(8):Paper \#{082704}, 26 pp., 2012.

\bibitem{CLO}
D.~Cox, J.~Little, and D.~O'Shea.
\newblock Ideals, varieties, and algorithms: an introduction to computational
  algebraic geometry and commutative algebra.
\newblock {\em Springer Verlag}, 1(5):7--3, 1992.

\bibitem{DeckerLossen}
W.~Decker and C.~Lossen.
\newblock {\em Computing in algebraic geometry}, volume~16.
\newblock Springer, 2006.

\bibitem{DuMa09}
G.~Duval and A.~J. Maciejewski.
\newblock Jordan obstruction to the integrability of {H}amiltonian systems with
  homogeneous potentials.
\newblock {\em Ann. Inst. Fourier (Grenoble)}, 59(7):2839--2890, 2009.

\bibitem{fgb}
J.-C. {Faug\`ere}.
\newblock {FG}b.
\newblock \url{http://www-polsys.lip6.fr/\textasciitilde jcf}.

\bibitem{Hietarinta83}
J.~Hietarinta.
\newblock A search for integrable two-dimensional {H}amiltonian systems with
  polynomial potential.
\newblock {\em Phys. Lett. A}, 96(6):273--278, 1983.

\bibitem{Kellogg21}
O.~D. Kellogg.
\newblock On a {D}iophantine {P}roblem.
\newblock {\em Amer. Math. Monthly}, 28(8-9):300--303, 1921.

\bibitem{Kimura70}
T.~Kimura.
\newblock On {R}iemann's equations which are solvable by quadratures.
\newblock {\em Funkcial. Ekvac.}, 12:269--281, 1969/1970.

\bibitem{LiMaVa11}
J.~Llibre, A.~Mahdi, and C.~Valls.
\newblock Analytic integrability of {H}amiltonian systems with a homogeneous
  polynomial potential of degree 4.
\newblock {\em J. Math. Phys.}, 52(1):012702, 9, 2011.

\bibitem{LiMaVa11b}
J.~Llibre, A.~Mahdi, and C.~Valls.
\newblock Polynomial integrability of the {H}amiltonian systems with
  homogeneous potential of degree {$-3$}.
\newblock {\em Phys. D}, 240(24):1928--1935, 2011.

\bibitem{MaPr04}
A.~J. Maciejewski and M.~Przybylska.
\newblock All meromorphically integrable 2{D} {H}amiltonian systems with
  homogeneous potential of degree 3.
\newblock {\em Phys. Lett. A}, 327(5-6):461--473, 2004.

\bibitem{MaPr05}
A.~J. Maciejewski and M.~Przybylska.
\newblock Darboux points and integrability of {H}amiltonian systems with
  homogeneous polynomial potential.
\newblock {\em J. Math. Phys.}, 46(6):062901, 33, 2005.

\bibitem{MaPrYo08}
A.~J. Maciejewski, M.~Przybylska, and H.~Yoshida.
\newblock Necessary conditions for super-integrability of {H}amiltonian
  systems.
\newblock {\em Phys. Lett. A}, 372(34):5581--5587, 2008.

\bibitem{MOY10}
A.~J. Maciejewski, M.~Przybylska, and H.~Yoshida.
\newblock Necessary conditions for classical super-integrability of a certain
  family of potentials in constant curvature spaces.
\newblock {\em J. Phys. A}, 43(38):382001, 15, 2010.

\bibitem{MaPrYo12}
A.~J. Maciejewski, M.~Przybylska, and H.~Yoshida.
\newblock Necessary conditions for the existence of additional first integrals
  for {H}amiltonian systems with homogeneous potential.
\newblock {\em Nonlinearity}, 25(2):255--277, 2012.

\bibitem{MaSi11}
R.~Mart{\'{\i}}nez and C.~Sim{\'o}.
\newblock Non-integrability of the degenerate cases of the swinging {A}twood's
  machine using higher order variational equations.
\newblock {\em Discrete Contin. Dyn. Syst.}, 29(1):1--24, 2011.

\bibitem{MoRa01a}
J.~J. Morales-Ruiz and J.~P. Ramis.
\newblock Galoisian obstructions to integrability of {H}amiltonian systems.
  {I}.
\newblock {\em Methods Appl. Anal.}, 8(1):33--95, 2001.

\bibitem{MoRa01b}
J.~J. Morales-Ruiz and J.~P. Ramis.
\newblock Galoisian obstructions to integrability of {H}amiltonian systems.
  {II}.
\newblock {\em Methods Appl. Anal.}, 8(1):97--111, 2001.

\bibitem{MoRa01}
J.~J. Morales-Ruiz and J.~P. Ramis.
\newblock A note on the non-integrability of some {H}amiltonian systems with a
  homogeneous potential.
\newblock {\em Methods Appl. Anal.}, 8(1):113--120, 2001.

\bibitem{MoRaSi07}
J.~J. Morales-Ruiz, J.-P. Ramis, and C.~Sim\'o.
\newblock Integrability of {H}amiltonian systems and differential {G}alois
  groups of higher variational equations.
\newblock {\em Ann. Sci. \'Ecole Norm. Sup. (4)}, 40(6):845--884, 2007.

\bibitem{MoSi09}
J.~J. Morales-Ruiz and S.~Simon.
\newblock On the meromorphic non-integrability of some {$N$}-body problems.
\newblock {\em Discrete Contin. Dyn. Syst.}, 24(4):1225--1273, 2009.

\bibitem{NaYo01}
K.~Nakagawa and H.~Yoshida.
\newblock A list of all integrable two-dimensional homogeneous polynomial
  potentials with a polynomial integral of order at most four in the momenta.
\newblock {\em J. Phys. A}, 34(41):8611--8630, 2001.

\bibitem{Przybylska07}
M.~Przybylska.
\newblock Finiteness of integrable {$n$}-dimensional homogeneous polynomial
  potentials.
\newblock {\em Phys. Lett. A}, 369(3):180--187, 2007.

\bibitem{SVF97}
M.~E. Sansaturio, I.~Vigo-Aguiar, and J.~M. Ferr{\'a}ndiz.
\newblock Non-integrability of some {H}amiltonian systems in polar coordinates.
\newblock {\em J. Phys. A}, 30(16):5869--5876, 1997.

\bibitem{Shibayama11}
M.~Shibayama.
\newblock Non-integrability of the collinear three-body problem.
\newblock {\em Discrete Contin. Dyn. Syst.}, 30(1):299--312, 2011.

\bibitem{StPr13}
M.~Studzi{\'n}ski and M.~Przybylska.
\newblock Darboux points and integrability analysis of {H}amiltonian systems
  with homogeneous rational potentials.
\newblock {\em Phys. D}, 249:1--15, 2013.

\bibitem{Tsygvintsev01}
A.~Tsygvintsev.
\newblock The meromorphic non-integrability of the three-body problem.
\newblock {\em J. Reine Angew. Math.}, 537:127--149, 2001.

\bibitem{VSF03}
M.~I. Vigo-Aguiar, M.~E. Sansaturio, and J.~M. Ferr{\'a}ndiz.
\newblock Integrability of {H}amiltonians with polynomial potentials.
\newblock {\em J. Comput. Appl. Math.}, 158(1):213--224, 2003.
\newblock Selected papers from the Conference on Computational and Mathematical
  Methods for Science and Engineering (Alicante, 2002).

\bibitem{Weispfenning92}
V.~Weispfenning.
\newblock Comprehensive {G}r\"obner bases.
\newblock {\em J. Symbolic Comput.}, 14(1):1--29, 1992.

\bibitem{Yoshida87}
H.~Yoshida.
\newblock A criterion for the nonexistence of an additional integral in
  {H}amiltonian systems with a homogeneous potential.
\newblock {\em Phys. D}, 29(1-2):128--142, 1987.

\end{thebibliography}

\def\cprime{$'$}

\end{scriptsize}

\end{document}